\documentclass[aps,pra,groupedaddress,superscriptaddress,amsmath,amssymb,showpacs,twocolumn,footinbib,floatfix]{revtex4-1}


\usepackage{graphicx}
\usepackage{bm,bbm,braket}
\usepackage{color,booktabs,xcolor}
\usepackage{mathtools,amsthm}
\usepackage{hyperref}
\usepackage{microtype,enumerate}

\usepackage{times,txfonts}

\usepackage{multirow,array}

\usepackage{mathdots}

\usepackage{thmtools, thm-restate}


\DeclareMathOperator{\Tr}{Tr}
\DeclareMathOperator{\diag}{diag}
\let\Re\relax

\DeclareMathOperator{\Re}{Re}

\DeclareMathOperator{\cl}{cl}
\DeclareMathOperator{\co}{co}

\newcommand{\Herm}{\mathbb{H}}

\renewcommand{\H}{\mathcal{H}}
\newcommand{\F}{\mathcal{F}}
\renewcommand{\O}{\mathcal{O}}
\renewcommand{\S}{\mathcal{S}}
\renewcommand{\P}{\mathcal{P}}
\newcommand{\T}{\mathcal{T}}
\newcommand{\B}{\mathcal{B}}
\newcommand{\V}{\mathcal{V}}
\newcommand{\C}{\mathcal{C}}
\newcommand{\M}{\mathcal{M}}

\newcommand{\RR}{\mathbb{R}}

\newcommand{\OO}{\O(\bs{\lambda}\rightarrow\bs{\mu})}
\newcommand{\OOG}{\O_G(\bs{\lambda}\rightarrow\bs{\mu})}

\newcommand{\QCM}{\text{\rm QCM}}

\newcommand{\lset}{\left\{\left.}
\newcommand{\rset}{\right\}}
\renewcommand{\bar}{\;\rule{0pt}{9.5pt}\right|\;}
\newcommand{\<}{\left\langle}
\renewcommand{\>}{\right\rangle}

\let\mathds\mathbbm

\newcommand{\texteq}[1]{\stackrel{\mathclap{\mbox{\text{\scriptsize #1}}}}{=}}

\newcommand{\textgeq}[1]{\stackrel{\mathclap{\mbox{\text{\scriptsize #1}}}}{\geq}}
\newcommand{\id}{\mathds{1}}
\newcommand{\bs}[1]{\boldsymbol{#1}}

\newtheorem{thm}{Theorem}
\newtheorem*{thm*}{Theorem}
\newtheorem{prop}{Proposition}
\newtheorem*{prop*}{Proposition}
\newtheorem{lemma}{Lemma}
\newtheorem*{lemma*}{Lemma}

\newtheorem*{cor*}{Corollary}

\newtheorem*{cj*}{Conjecture}

\theoremstyle{definition}
\newtheorem{definition}{Definition}
\newtheorem*{definition*}{Definition}
\newtheorem*{rem}{Remark}

\newenvironment{custompost}[1]
  {\innercustompost}
  {\endinnercustompost}


\makeatletter
  \adddialect\l@ENGLISH\l@english
    \adddialect\l@en\l@english
\makeatother
\AtBeginDocument{
\heavyrulewidth=.08em
\lightrulewidth=.05em
\cmidrulewidth=.03em
\belowrulesep=.65ex
\belowbottomsep=0pt
\aboverulesep=.4ex
\abovetopsep=0pt
\cmidrulekern=.5em
}

\newcommand{\notts}{\affiliation{School of Mathematical Sciences and Centre for the Mathematics and Theoretical Physics of Quantum Non-Equilibrium Systems, University of Nottingham, University Park, Nottingham NG7 2RD, United Kingdom}}


\begin{document}

\title{Gaussian quantum resource theories}

\author{Ludovico Lami}\email{ludovico.lami@gmail.com}
\notts

\author{Bartosz Regula}\email{bartosz.regula@gmail.com}
\notts

\author{Xin Wang}
\affiliation{Centre for Quantum Software and Information, Faculty of Engineering and Information Technology, University of Technology Sydney, NSW 2007, Australia}

\author{Rosanna Nichols}
\notts

\author{Andreas Winter}
\affiliation{F\'{\i}sica Te\`{o}rica: Informaci\'{o} i Fen\`{o}mens Qu\`{a}ntics, Departament de F\'{i}sica, Universitat Aut\`{o}noma de Barcelona, ES-08193 Bellaterra (Barcelona), Spain}
\affiliation{ICREA --- Instituci\'o Catalana de Recerca i Estudis Avan\c{c}ats, Pg.~Lluis Companys 23, ES-08010 Barcelona, Spain}

\author{Gerardo Adesso}
\notts

\date{\today}

\begin{abstract}
We develop a general framework to assess capabilities and limitations of the Gaussian toolbox in continuous variable quantum information theory. Our framework allows us to characterize the structure and properties of quantum resource theories specialized to Gaussian states and Gaussian operations, establishing rigorous methods for their description and yielding a unified approach to their quantification. We show in particular that, under a few intuitive and physically motivated assumptions on the set of free states, no Gaussian quantum resource can be distilled with free Gaussian operations, even when an unlimited supply of the resource state is available. This places fundamental constraints on state manipulations in all such Gaussian resource theories.
We discuss in particular the applications to quantum entanglement, where we extend previously known results by showing that Gaussian entanglement cannot be distilled even with Gaussian operations preserving the positivity of the partial transpose, as well as to other Gaussian resources such as steering and optical nonclassicality. A comprehensive semidefinite programming representation of all these resources is explicitly provided.
\end{abstract}

\maketitle


\section{Introduction}

Continuous-variable (CV) systems of quantum harmonic oscillators play a prominent role in quantum science,  due to their ubiquitous presence, outstanding theoretical importance, and practical relevance in many quantum technologies~\cite{braunstein_2005,cerf_2007,adesso_2014}. Among them, so-called \textit{Gaussian states} are privileged as being remarkably affordable to produce and control in laboratory, while retaining, together with Gaussian operations, a significant part of the power of quantum information processing~\cite{adesso_2007-2,wang_2007,weedbrook_2012,adesso_2014}. However, such a restricted set of resources is insufficient to realize fundamental tasks like fault-tolerant quantum computation~\cite{knill_2001}, entanglement distillation~\cite{eisert_2002-1,fiurasek_2002,giedke_2002}, error correction~\cite{niset_2009}, or optimal metrology~\cite{adesso_2009}, and needs to be supplemented by nonlinear elements, e.g., photon detectors~\cite{duan_2000,takahashi_2010,knill_2001}, to achieve universality. A thorough investigation of properties and limitations of the Gaussian paradigm is thus crucial to deepen our theoretical understanding of quantum optics and information and to set suitable experimental benchmarks in practical tasks.

\begin{figure}[t]
	\centering
	\includegraphics[width=5.5cm]{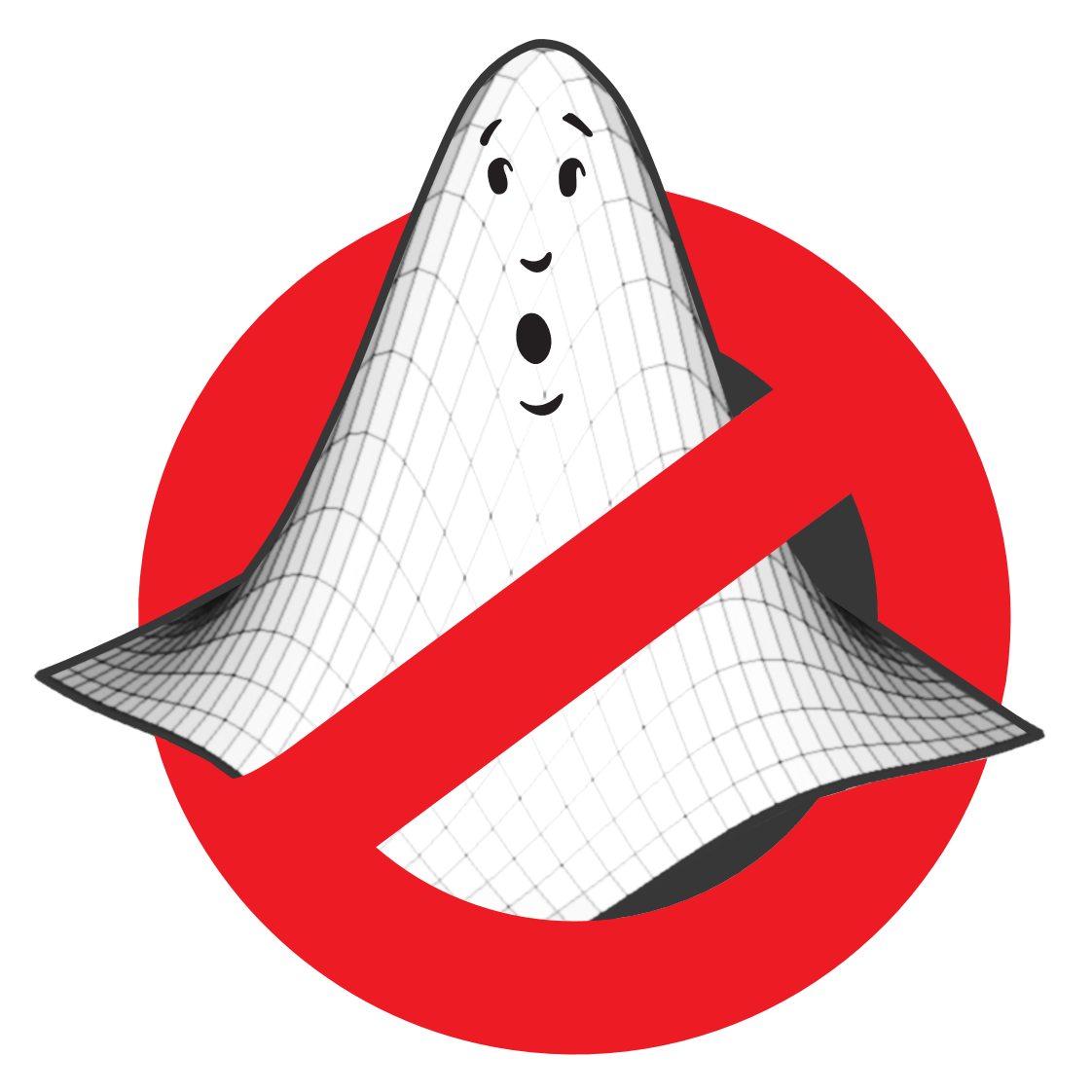}
	\caption{A general framework is developed to assess capabilities and limitations of the Gaussian toolbox in continuous variable quantum information theory. It is shown that under a few assumptions on the set of free states, no Gaussian quantum resource can be distilled with free Gaussian operations, even when an unlimited supply of the resource state is available. We refer to this general no-go result as the {\it Gaussbusters} Theorem.}
	\label{busterfig}
\end{figure}

The formalism of \textit{quantum resource theories}~\cite{horodecki_2012, sperling_2015, brandao_2015, delrio_2015, coecke_2016, gour_2017, regula_2018, anshu_2017, lami_2018} lends itself well to the investigation of such features and restrictions. Different quantum phenomena have been recently recognized and characterized as resources, including entanglement~\cite{horodecki_2009}, asymmetry~\cite{gour_2008,lostaglio_2015}, athermality~\cite{brandao_2013}, purity~\cite{horodecki_2003}, nonlocality~\cite{gallego_2012,vicente_2014}, coherence~\cite{aberg_2006,gour_2008,baumgratz_2014,streltsov_2017}, nonclassicality~\cite{vogel_2014,killoran_2016,theurer_2017,das_2017}, Einstein-Podolsky-Rosen (EPR) steering~\cite{gallego_2015}, contextuality~\cite{grudka_2014}, magic~\cite{veitch_2014,ahmadi_2017}, non-Markovianity~\cite{wakakuwa_2017}, and noiseless classical or quantum communication in quantum Shannon theory~\cite{Devetak_2005}. However, since each such resource may require a completely different approach to describe it, the alluring task of establishing a unified framework is very challenging. Although some general statements about quantum resource theories can be derived from suitable assumptions~\cite{horodecki_2012,brandao_2015,gour_2017,liu_2017,regula_2018,anshu_2017}, most research to date focused on finite-dimensional scenarios, and despite an increasing interest in developing a resource-theoretic approach to quantum optics~\cite{eisert_2003-1,idel_2016,zhang_2016,brown_2016,tan_2017,friis_2017}, there are no results that apply to a large class of resources in infinite dimensions.

In this paper we extend the general formalism of resource theories to CV systems, by introducing a framework for the study of resources whose free states (Sec.~\ref{sec:freestates}) and operations (Sec.~\ref{sec:freeoperations}) are Gaussian. This allows us to exploit the general resource-theoretic formalism to ultimately assess how powerful Gaussian states and operations are in CV quantum information theory. We show that all Gaussian resource theories which satisfy a set of physically-motivated conditions share a common structure, allowing us to simplify the description and quantification of many fundamental resources such as entanglement, EPR steering, and nonclassicality (squeezing). We establish universal constraints on state transformations under free Gaussian operations in any such resource theory, showing in Sec.~\ref{sec:nogo} that the operational task of resource distillation is impossible if one is restricted to Gaussian states and free Gaussian operations (see Fig.~\ref{busterfig}). In particular, we generalize the no-go theorem of Ref.~\cite{giedke_2002} by showing that Gaussian entanglement cannot be distilled by Gaussian operations preserving the positivity of partial transpose  --- a larger class than previously considered --- and we prove equivalent no-go results for other relevant Gaussian resources. Detailed examples and applications are illustrated in Sec.~\ref{sec examples}.
We discuss our main results below, deferring technical derivations to the Appendix.


\section{Free states}\label{sec:freestates}

Let us briefly recall the basics of Gaussian states~\cite{weedbrook_2012,adesso_2014,serafini_2017}. Mathematically, a $n$-mode CV system is identified by a collection of canonical operators $x_1,p_1,\ldots, x_n, p_n$, which we can arrange in a vector $r\coloneqq (x_1,\ldots, x_n, p_1, \ldots, p_n)^T$. The canonical commutation relations $[x_j,p_k]=i\delta_{jk}$ can then be cast as $[r,r^T]= i\Omega$, where $\Omega \coloneqq \left(\begin{smallmatrix} 0 & \mathds{1} \\ -\mathds{1} & 0 \end{smallmatrix} \right)$ is the symplectic form. Denoting by $\mathcal{G}_n$ the set of $n$-mode Gaussian states, any $\rho \coloneqq \rho_G[V,s] \in \mathcal{G}_n$ is fully specified by its (real) displacement vector $s \coloneqq \Tr [r\, \rho]$ and its (real, symmetric) covariance matrix  $V \coloneqq \Tr [ \{r - s, r^T - s^T\}\,\rho]$ with $\{\cdot, \cdot\}$ being the anticommutator. Letting $\M_{2n}(\RR)$ denote the set of all real $2n \times 2n$ matrices, we will call
\begin{equation}\QCM_n \coloneqq \lset V \in \M_{2n}(\RR) \bar V = V^T,\, V \geq i \Omega \rset\end{equation}
the set of quantum covariance matrices, i.e., those $V$ that satisfy the Robertson-Schr\"odinger uncertainty principle~\cite{simon_1994}. Note that any such $V \in \QCM_n$ is strictly positive definite~\cite{serafini_2017}.

Resource theories are built upon two main notions~\cite{brandao_2015}: (i) the subset $\F$ of free states, i.e., those which do not possess the given resource; and (ii) the subset $\O$ of free operations, i.e., those quantum channels unable to generate the resource, specified by the physical constraints of the theory. As free states can be prepared by free operations at no cost, during a protocol one may add ancillary systems to one's original system; following~\cite{brandao_2015}, we will refer to such systems as spatially separated.

In any resource theory, there may be different ways to define the set of free states (think e.g.~of entanglement theory, in which one needs to specify a partition to identify separable states). To address this, we assume that each of the spatially separated subsystems $j=1,\ldots, l$ is fully specified by a set $\lambda_j$ of variables, which can then be grouped in a single vector $\bs{\lambda}\coloneqq (\lambda_1, \ldots, \lambda_l)$. For instance, one such variable will be the total number of modes $n_j$ of each subsystem $j$. For a CV system made of $l$ spatially separated subsystems identified by a vector of variables $\bs{\lambda}$, we then denote by $\F(\bs{\lambda})$ the subset of free states, and by $\F\coloneqq \bigcup_{\bs{\lambda}} \F(\bs{\lambda})$ the set of all free states over arbitrary collections of spatially separated subsystems. 

Since we care about the Gaussian restriction of any resource theory, we will focus on the set $\F_G\coloneqq \bigcup_{\bs{\lambda}} \F_G(\bs{\lambda})$ of free \emph{Gaussian} states, where  $\F_G(\bs{\lambda})\coloneqq \F(\bs{\lambda})\cap \mathcal{G}_N$, with $N=\sum_j n_j$, for a fixed $\bs{\lambda}$, and the corresponding set of free covariance matrices is
\begin{equation}\begin{aligned}
	\V_\F(\bs{\lambda}) \coloneqq \lset V\in \mathcal{M}_{2N}(\RR) \bar \exists\, s\in\mathds{R}^{2N}:\ \rho_G[V,s]\in \F(\bs{\lambda}) \rset.
\end{aligned}\end{equation}

There are some standard assumptions about the set of free states, formalized as Postulates~\ref{post tensor}-\ref{post weak closed} in~\cite{brandao_2015}, that have a sound theoretical basis and apply to a wide range of theories. We will therefore regard them as a safe starting point to establish a set of fundamental requirements for Gaussian resource theories. Before proceeding further, let us make a first working assumption that will simplify the following analysis considerably:
\begin{custompost}{0} \label{post D invariance}
The set of free states is invariant under displacement operations.
\end{custompost}
To justify this assumption, note that displacement operations can be applied to any system by adding an ancillary system in a highly excited coherent state, and combining the two systems at a low-transmissivity beam splitter~\cite{paris_1996}. From an experimental standpoint, coherent states and beam splitters are relatively cheap tools.
Crucially, Postulate~\ref{post D invariance} implies that the set of free Gaussian states is now fully described by the corresponding covariance matrices, so we can write $\V_\F (\bs{\lambda}) = \lset V\in \mathcal{M}_{2N}(\RR) \bar \rho_G[V,0]\in \F(\bs{\lambda})\rset$ and $\F_G(\bs{\lambda}) = \lset \rho_G[V,s] \bar V\in \V_\F(\bs{\lambda}),\ s\in\mathds{R}^{2N} \rset$.
The following assumptions define the structure of the theory for composite systems.
\begin{custompost}{I}\label{post tensor} The set of free states is closed under tensor products of spatially separated subsystems.\end{custompost}\vspace*{-\baselineskip}
\begin{custompost}{II}\label{post partial trace} The set of free states is closed under partial traces over spatially separated subsystems.\end{custompost}\vspace*{-\baselineskip}
\begin{custompost}{III}\label{post permutation} The set of free states is closed under permutations of spatially separated subsystems.\end{custompost}
These three properties carry over to the restricted set of free Gaussian states $\F_G$, since it is well known that Gaussian states are also closed under the above operations. Postulate~\ref{post tensor} can be rewritten symbolically as $\F(\bs{\lambda}) \otimes \F(\bs{\lambda'}) \subseteq \F(\bs{\lambda} \oplus \bs{\lambda'})$, where for $\bs{\lambda}=(\lambda_1,\ldots, \lambda_l)$ and $\bs{\lambda'}=(\lambda_1, \ldots, \lambda_{l'})$ one sets $\bs{\lambda}\oplus \bs{\lambda'} \coloneqq (\lambda_1, \ldots, \lambda_l, \lambda'_1, \ldots, \lambda'_{l'})$.  At the level of covariance matrices, this translates to $\V_\F(\bs{\lambda}) \oplus \V_\F(\bs{\lambda'}) \subseteq \V_\F(\bs{\lambda}\oplus \bs{\lambda'})$. Similarly, we can formulate Postulate~\ref{post partial trace} as $\Pi\, \V_\F(\bs{\lambda}\oplus \bs{\lambda'})\, \Pi^T \subseteq \V_\F(\bs{\lambda})$, where $\Pi$ is the projector onto the subsystems corresponding to $\bs{\lambda}$.

Apart from the above Postulates, one of the the most basic assumptions that we can make about the unrestricted set (i.e., before the intersection with $\mathcal{G}_N$) of free states is undoubtedly \emph{convexity}. This is justifiable as in most physically relevant cases we do not expect an increase in quantum resources under classical mixing \footnote{Note, however, that one may define also non-convex resources --- notably, we obtain one such resource theory by letting the set of all Gaussian states to be free --- but we will not study them here.}. At the level of all states, we have then:
\begin{custompost}{IV} \label{post convexity}
For all $\bs{\lambda}$, the set of free states $\F(\bs{\lambda})$ is convex.
\end{custompost}
Our final Postulate will pertain to the closedness of the set of free states in the Banach space $\T(\H_N)$ of trace-class operators acting on the Hilbert space $\H_N$, endowed with the trace norm $\|\cdot\|_1$.

\begin{custompost}{V} \label{post weak closed}
For all $\bs{\lambda}$, the set of free states $\F(\bs{\lambda}) \subseteq \T(\H_N)$ is norm-closed.
\end{custompost}

Since infinite-dimensional spaces admit many legitimate linear topologies with respect to which we can define closedness, the above choice of the norm topology may seem rather arbitrary. However, it turns out that in the present context any other reasonable choice still yields the same result. In fact, Mackey's theorem~\cite[Thm.~8.9]{aliprantis_2007} ensures that all linear topologies on $\T(\H_N)$ that agree on the set of continuous linear functionals possess the same closed convex sets. To see why this is physically relevant, remember that the norm-continuous linear functionals on $\T(\H_N)$ can be written as $\rho\mapsto \Tr[\rho A]$, where $A\in \B(\H_N)$ is a generic bounded operator, i.e., an observable. Summarizing the above discussion: Postulates~\ref{post convexity} and~\ref{post weak closed} together imply that all sets of free states $\F(\bs{\lambda})$ and $\F_G(\bs{\lambda})$ are closed with respect to any linear topology whose corresponding continuous linear functionals are (all) the observables.
For further details on the issue of closedness of the relevant sets, refer to Appendix~\ref{app topoogy}.

Let us now discuss some consequences of the above assumptions in the Gaussian setting. In order to do so, it is important to understand the topology of the set of Gaussian states in some detail. In Appendix~\ref{app topoogy}, we show that Gaussian states form a closed set with respect to the trace norm topology (Lemma~\ref{lemma G closed}), and that the map $(V,s)\mapsto \rho_G[V,s]$ that sends a pair formed by a quantum covariance matrix and a real vector to the corresponding Gaussian state is continuous with respect to the same topology (Lemma~\ref{lemma V to rho 1-continuous}). A key difference between a Gaussian resource theory satisfying Postulates~\ref{post tensor}-\ref{post weak closed} and a corresponding finite-dimensional theory is that the set of Gaussian states is non-convex, hence $\F_G(\bs{\lambda})$ can not be expected to be convex either. We have instead a weaker property that we dub \textit{Gaussian convexity}: if a trace norm limit of convex combinations of free Gaussian states is a Gaussian state, then it must be free. Importantly, this implies the \textit{upward closedness} of the set of free covariance matrices~$\V_\F (\bs{\lambda})$, formalized as follows.
\begin{prop} \label{prop V}
When Postulates~\ref{post convexity} and~\ref{post weak closed} hold, the set $\V_\F(\bs{\lambda})$ is topologically closed as well as upward closed, in the sense that, if $V\in \V_\F(\bs{\lambda})$ and $W\geq V$, then $W\in\V_\F(\bs{\lambda})$.
\end{prop}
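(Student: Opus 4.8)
My plan is to prove the two assertions separately. Topological closedness follows quickly from the continuity of the state parametrization together with Postulate~\ref{post weak closed}, while the substance lies in upward closedness, for which I would use the physical fact that raising a covariance matrix amounts to adding classical Gaussian noise via a random displacement.

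For topological closedness, let $V_k \in \V_\F(\bs{\lambda})$ with $V_k \to V$ in $\M_{2N}(\RR)$. Each $V_k$ lies in $\QCM_N$, and since $\QCM_N$ is cut out by the closed conditions $V = V^T$ and $V \geq i\Omega$, the limit $V$ again belongs to $\QCM_N$, so that $\rho_G[V,0]$ is a genuine state. By the trace-norm continuity of $(V,s) \mapsto \rho_G[V,s]$ (Lemma~\ref{lemma V to rho 1-continuous}) we get $\rho_G[V_k,0] \to \rho_G[V,0]$ in $\norm{\cdot}_1$; as each $\rho_G[V_k,0]\in\F(\bs{\lambda})$ and $\F(\bs{\lambda})$ is norm-closed by Postulate~\ref{post weak closed}, the limit is free, i.e.\ $V \in \V_\F(\bs{\lambda})$.

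For upward closedness, fix $V \in \V_\F(\bs{\lambda})$ and a real symmetric $W \geq V$; then $W \geq V \geq i\Omega$, so $W \in \QCM_N$. Choose a centred Gaussian probability density $p_Y$ on $\RR^{2N}$ with classical covariance $Y \geq 0$ (proportional to $W-V$) tuned so that a random displacement drawn from $p_Y$ maps the covariance $V$ to $W$; such $Y$ exists exactly because $W - V \geq 0$. The standard noise-addition formula then reads
\begin{equation}
\rho_G[W,0] = \int d^{2N}\xi\; p_Y(\xi)\, D(\xi)\,\rho_G[V,0]\,D(\xi)^\dagger ,
\end{equation}
which is most transparently checked on characteristic functions, where convolving with $p_Y$ multiplies the Gaussian characteristic function of $V$ by that of $Y$ and reproduces precisely the covariance $W$.

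Every integrand $D(\xi)\,\rho_G[V,0]\,D(\xi)^\dagger$ is a displaced copy of the free state $\rho_G[V,0]$, hence itself free by Postulate~\ref{post D invariance}. I would realize the integral as a trace-norm limit of finite convex combinations $\sum_i q_i\, D(\xi_i)\,\rho_G[V,0]\,D(\xi_i)^\dagger$ obtained by discretizing $p_Y$; each such combination is free by Postulate~\ref{post convexity}. Since the limit $\rho_G[W,0]$ is a Gaussian state, the Gaussian-convexity property (the conjunction of Postulates~\ref{post convexity} and~\ref{post weak closed}) forces it to be free, whence $W \in \V_\F(\bs{\lambda})$. The main obstacle is precisely this last convergence in infinite dimensions: one must ensure that the operator integral is a genuine trace-norm limit of finitely supported mixtures, not merely a weak one. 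I would handle this by exploiting tightness of $p_Y$ to truncate the integral to a large ball at negligible trace-norm cost, and then the uniform continuity from Lemma~\ref{lemma V to rho 1-continuous} to show that the truncated Riemann sums converge in $\norm{\cdot}_1$, so that Postulate~\ref{post weak closed} applies to a bona fide norm limit.
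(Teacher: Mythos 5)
Your argument follows the paper's proof in essentially all respects: topological closedness comes from the trace-norm continuity of $V\mapsto\rho_G[V,0]$ (Lemma~\ref{lemma V to rho 1-continuous}) together with Postulate~\ref{post weak closed}, and upward closedness from realizing $\rho_G[W,0]$ as a random Gaussian displacement of the free state $\rho_G[V,0]$ and invoking Gaussian convexity. Two points of comparison are worth making. First, where you propose to prove trace-norm convergence of the discretized mixtures by hand (truncation by tightness plus convergence of Riemann sums), the paper sidesteps this entirely: weak convergence of the integral suffices, because the limit then lies in the closed convex hull $\cl\left(\co\F_G(\bs{\lambda})\right)\subseteq\F(\bs{\lambda})$, and by Mazur's theorem the weak and norm closures of a convex set coincide, so Postulates~\ref{post convexity} and~\ref{post weak closed} already guarantee that this closed convex hull consists of free states. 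Your Bochner-integral route is correct (the integrand is trace-norm continuous and uniformly bounded in $\|\cdot\|_1$), but it does more work than necessary. Second, your displacement step as written requires a Gaussian \emph{density} on $\RR^{2N}$ with classical covariance $W-V$, which exists only when $W-V>0$; for $W\geq V$ with $W-V$ singular and nonzero there is no such density, only a degenerate measure. The paper's ordering of the two claims resolves this automatically: it first establishes topological closedness, then proves only \emph{strict} upward closedness, and recovers the general case $W\geq V$ by approximation. Since you have already proved closedness, the same one-line fix applies to your argument.
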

Another desirable property of the set of free covariance matrices $\V_\F(\bs{\lambda})$ is for itself to be convex. Interestingly, this does not follow directly from our Postulates, but is indeed implied by an additional natural assumption, i.e., that the given set of free Gaussian states $\F_G(\bs{\lambda})$ is closed under mode-by-mode mixing with 50:50 beam splitters (Prop.~\ref{prop convexity}).


\section{Free operations and quantification}\label{sec:freeoperations}

In any resource theory, a free operation can be any channel which always maps free states into free states. However, the physical setting of the given resource can further restrict the allowed free operations: for instance, in entanglement theory, the distant laboratories paradigm leads to the set of local operations and classical communication (LOCC). To keep our results as general as possible, we will consider the \textit{maximal} set of resource non-generating operations, and only impose the natural restriction that
free operations should also be Gaussian, i.e., such that they always map a Gaussian state to a Gaussian state~\cite{giedke_2002, depalma_2015}.
\begin{definition}
A quantum channel $\Lambda:\T(\H_N)\rightarrow \T(\H_M)$ is called \textit{resource non-generating} if $\Lambda \left[ \F(\bs{\lambda}) \right] \subseteq \F(\bs{\mu})$ for systems described by variables $\bs{\lambda},\bs{\mu}$. The set of all resource non-generating operations is denoted by $\O(\bs{\lambda}\rightarrow\bs{\mu})$, and the restriction to Gaussian operations by $\O_G(\bs{\lambda}\rightarrow\bs{\mu})$. In particular, $\Lambda \left[ \F_G(\bs{\lambda}) \right] \subseteq \F_G(\bs{\mu})$ for all $\Lambda \in \O_G(\bs{\lambda}\rightarrow\bs{\mu})$.
\end{definition}

A fundamental question in any resource theory is, given several resourceful states, to quantify the degree of their resourcefulness and thus compare the usefulness of the states in operational tasks~\cite{plenio_2007,brandao_2015,regula_2018,anshu_2017}. For this, one needs a measure $\mu : \T(\H_N) \rightarrow \RR_+$ which satisfies two basic criteria: faithfulness, i.e., being minimum on all  (and only on) free states,  $\mu (\rho) = \inf_{\sigma\in\T(\H_{N})} \mu(\sigma) \iff \rho \in \F(\bs\lambda)$, as well as monotonicity, i.e., $\mu(\Lambda(\rho)) \leq \mu(\rho)$ for all free operations $\Lambda$. Here we stress that we can consider the maximal set of free operations $\OO$ without loss of generality, since any measure monotonic under $\OO$ will also be a monotone under any smaller subset of free operations. Analogously, in the setting of Gaussian resources, we will be interested in quantifiers $\mu_G : \QCM_N \rightarrow \RR_+$ defined at the level of covariance matrices and monotonic under the free Gaussian operations, so that $\mu_G(V') \leq \mu_G(V)$ where $V'$ is the covariance matrix corresponding to the state $\Lambda(\rho_G[V,s])$ with $\Lambda \in \O_G(\bs{\lambda}\rightarrow\bs{\mu})$.

A general instance of such a measure --- a variant of which has been considered in the characterization of entanglement before~\cite{giedke_2002} --- can be defined for any $V\in \QCM_N$ as
\begin{equation}
\kappa_\F (V) \coloneqq \min\lset t \geq 1 \bar t V\in \V_\F(\bs{\lambda}) \rset.
\label{kappa}
\end{equation}
The measure can be easily seen to be faithful in the sense that $\kappa_\F(V) \geq 1$ and $\kappa_\F(V) = 1$ iff $V \in \V_\F(\bs\lambda)$, and the fact that the set on the right-hand side of Eq.~\eqref{kappa} is non-empty is ensured by the upward closedness of $\V_\F(\bs\lambda)$. The properties and monotonicity of the above quantifier can be summarized as follows.
\begin{prop}\label{prop:kappa}
The function $\kappa_\F(\cdot)$ is finite and well-defined on all covariance matrices, faithful, continuous, and monotonic under $\OOG$.
\end{prop}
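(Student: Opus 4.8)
The plan is to establish the four asserted properties in turn, relying throughout on the two structural facts from Proposition~\ref{prop V} --- that $\V_\F(\bs{\lambda})$ is topologically closed and upward closed --- together with the fact that every $V\in\QCM_N$ is strictly positive definite, so that $\lambda_{\min}(V)>0$. This strict positivity will be the recurring tool.

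\emph{Well-definedness, finiteness, faithfulness.} Fixing $V\in\QCM_N$ and any $V_0\in\V_\F(\bs{\lambda})$ (the set of free covariance matrices being non-empty by standing assumption), I would note that $V\geq\lambda_{\min}(V)\,\id$ forces $tV\geq\norm{V_0}\,\id\geq V_0$ for all large $t$, so upward closedness gives $tV\in\V_\F(\bs{\lambda})$; hence $\{t\geq 1 : tV\in\V_\F(\bs{\lambda})\}$ is non-empty and $\kappa_\F(V)<\infty$. This set is the intersection with $[1,\infty)$ of the preimage of the closed set $\V_\F(\bs{\lambda})$ under the continuous map $t\mapsto tV$, hence closed and bounded below, so the minimum is attained. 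Faithfulness is then immediate, as already observed in the text. A convenient byproduct, used below, is that upward closedness and $V>0$ make this set the full half-line $[\kappa_\F(V),\infty)$.

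\emph{Monotonicity.} A channel $\Lambda\in\OOG$ acts on covariance matrices as $V\mapsto XVX^T+Y$ with $X,Y$ real, $Y=Y^T$, obeying the complete-positivity constraint $Y\geq i(\Omega - X\Omega X^T)$. The key sub-step I would isolate is that $Y\geq 0$: for any \emph{real} vector $v$ one has $v^T\Omega v=0$ and $v^T X\Omega X^T v=(X^T v)^T\Omega(X^T v)=0$ by antisymmetry of $\Omega$, so the constraint forces $v^T Y v\geq 0$. Now set $t=\kappa_\F(V)$, so that $tV\in\V_\F(\bs{\lambda})$ and thus $\rho_G[tV,0]\in\F_G(\bs{\lambda})$; since $\Lambda$ maps free Gaussian states to free Gaussian states, the output covariance matrix $tXVX^T+Y$ lies in $\V_\F(\bs{\mu})$ (discarding the displacement via Postulate~\ref{post D invariance}). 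Writing $V'\coloneqq XVX^T+Y$ for the image of $V$, we have $tV'=tXVX^T+tY\geq tXVX^T+Y$ because $t\geq 1$ and $Y\geq 0$, so upward closedness yields $tV'\in\V_\F(\bs{\mu})$ and hence $\kappa_\F(V')\leq t=\kappa_\F(V)$.

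\emph{Continuity.} I would prove lower and upper semicontinuity separately along $V_n\to V$ in $\QCM_N$. For the lower bound, set $L\coloneqq\liminf_n\kappa_\F(V_n)$ (the case $L=\infty$ being trivial) and pass to a subsequence with $\kappa_\F(V_n)\to L$; then $\kappa_\F(V_n)V_n\in\V_\F(\bs{\lambda})$ converges to $LV$, which lies in $\V_\F(\bs{\lambda})$ by closedness, and $L\geq 1$, giving $\kappa_\F(V)\leq L$. Upper semicontinuity is where I expect the main obstacle, precisely because $\V_\F(\bs{\lambda})$ is closed rather than open, so membership is not stable under perturbation; the remedy is to spend the strictly positive margin. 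Fixing $\epsilon>0$ and $t^*=\kappa_\F(V)$, I would compute $(t^*+\epsilon)V_n-t^*V=\epsilon V+(t^*+\epsilon)(V_n-V)\geq\big(\epsilon\lambda_{\min}(V)-\norm{(t^*+\epsilon)(V_n-V)}\big)\id$, which is positive semidefinite once $n$ is large enough that the norm term falls below $\epsilon\lambda_{\min}(V)$. For such $n$ one has $(t^*+\epsilon)V_n\geq t^*V\in\V_\F(\bs{\lambda})$, so upward closedness gives $\kappa_\F(V_n)\leq t^*+\epsilon$; sending $n\to\infty$ and then $\epsilon\to 0$ yields $\limsup_n\kappa_\F(V_n)\leq\kappa_\F(V)$, and combining the two bounds establishes continuity. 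The hard part is thus this single estimate, in which the non-openness of $\V_\F(\bs{\lambda})$ is circumvented by the strict positive definiteness of quantum covariance matrices.
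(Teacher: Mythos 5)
Your proposal is correct in substance and follows the paper's skeleton for finiteness (strict positivity of $V$ plus upward closedness of $\V_\F(\bs{\lambda})$), faithfulness, and lower semicontinuity (closedness of $\V_\F(\bs{\lambda})$ applied to the convergent sequence $\kappa_\F(V_n)V_n$). The two places where you genuinely diverge are instructive. For upper semicontinuity you use the direct estimate $(t^*+\epsilon)V_n\geq t^*V$ for large $n$ and conclude by upward closedness; the paper instead routes through an auxiliary scaling inequality $\kappa_\F(sV)\geq s^{-1}\kappa_\F(V)$ combined with monotonicity under adding positive matrices, arriving at the quantitative bound $\kappa_\F(V+\Delta)\leq\kappa_\F(V)/(1-\|\Delta\|_\infty\|V^{-1}\|_\infty)$. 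Your version is shorter and exploits the same strict-positivity margin; the paper's yields an explicit modulus of continuity as a byproduct. For monotonicity you parametrize elements of $\OOG$ by their affine action $V\mapsto XVX^T+Y$ and observe that the complete-positivity constraint forces $Y\geq 0$ on real vectors (hence $Y\geq0$ since $Y$ is real symmetric), so $tV'\geq \Lambda(tV)\in\V_\F(\bs{\mu})$ and upward closedness finishes the job --- an elegant argument that avoids Schur complements entirely. The paper instead works with the Giedke--Cirac--Fiur\'a\v{s}ek representation $V_A\mapsto(\Gamma_{AB}+\Sigma V_A\Sigma)/(\Gamma_A+\Sigma V_A\Sigma)$ and uses homogeneity and monotonicity of the Schur complement together with $\xi\Gamma_{AB}\geq\Gamma_{AB}$. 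The one caveat you should address is that your affine normal form must be justified as covering \emph{all} of $\OOG$: the paper's Choi-matrix form encompasses general Gaussian completely positive maps, including the trace non-increasing ones arising from Gaussian measurements, and the paper explicitly exploits this extra generality in a subsequent remark (monotonicity for any $\Gamma_{AB}\geq0$) to widen the scope of the no-go theorem. Since $\O_G$ is defined as a set of channels your restriction is defensible, but a sentence invoking the classification of Gaussian-to-Gaussian channels (e.g., the normal form of De Palma et al.) would close that gap, and you would still obtain a slightly weaker statement than the one the paper ultimately uses.
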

We defer the proof to Appendix~\ref{app G resources} (see Prop.~\ref{prop:kappa-reformul}).
Note that, if membership of the set $\V_\F(\bs{\lambda})$ can be decided by semidefinite constraints at the level of covariance matrices, the evaluation of $\kappa_\F(V)$ reduces to a semidefinite program. We discuss such cases in Section~\ref{sec examples} and in Appendix~\ref{app SDP}.


\section{No-go theorem for Gaussian resource distillation}\label{sec:nogo}

At the heart of every resource theory lies the problem of characterizing state transformations which are allowed by the given set of free operations. In particular, the operational task of {\it resource distillation} deals with using free operations to convert multiple copies of a given quantum state into a smaller number of target states, usually representing maximally resourceful states. This task was first considered in the resource theory of entanglement with LOCC~\cite{bennett_1996-4,bennett_1996-5}, and has been later extended to more general settings~\cite{rains_1999,rains_2001,brandao_2011,buscemi_2010-2,fang_2017} and other quantum resources~\cite{bravyi_2005,winter_2016,regula_2017-2}. Entanglement distillation has also been considered for Gaussian states~\cite{giedke_2002,fiurasek_2002,eisert_2002-1}, where the task can be expressed as using LOCC to transform multiple copies of a bipartite state $\rho_{AB}^{\otimes n}$ into a state which approaches a maximally entangled state as $n \to \infty$.
An archetypal example of the analysis of the limitations of the Gaussian paradigm in this context has been carried out in~\cite{giedke_2002}, where it was shown that Gaussian LOCC protocols are not sufficient to distill Gaussian entanglement.

Since the existence of a ``golden unit'' or a unique maximally resourceful state is not guaranteed in arbitrary quantum resource theories, we can consider the more general task of approximately converting multiple copies of a quantum state into another state which is more resourceful; that is, given a Gaussian state with covariance matrix $V$, we ask about the existence of free operations that implement the transformations $V^{\oplus n} \rightarrow W_{n}$, for some sequence of covariance matrices $W_{n}$ that approach a fixed target $W$ such that $\kappa_\F (W) > \kappa_\F (V)$.
A central result of this work is a general {\it no-go} result entailing that, in any resource theory in our framework, the distillation of the given resource with free Gaussian operations is {\it de facto} impossible, as illustrated in Fig.~\ref{busterfig}. This result is in stark contrast with the main finding of~\cite{brandao_2015}, which instead implies the complete reversibility of the considered resource theory. Such a dramatic difference in the conclusions is even more surprising when one considers that the starting postulates are quite similar in the two cases, and illustrates clearly the intrinsic limitations of the Gaussian framework.
\begin{thm}[{\bf Gaussbusters}] \label{thm:no-go}
Consider an arbitrary Gaussian resource theory satisfying Postulates~\ref{post D invariance}-\ref{post weak closed} and two covariance matrices $V,W\in \text{\emph{QCM}}_N$. If $\kappa_\F (W) > \kappa_\F (V)$, then it is impossible to find a sequence $(W_{n})_{n\in\mathds{N}}\subset\text{\emph{QCM}}_{N}$ such that $\lim_{n\rightarrow\infty} W_{n}=W$ and the transformations $V^{\oplus n}\rightarrow W_{n}$ are possible with Gaussian resource non-generating operations for all $n$.
\end{thm}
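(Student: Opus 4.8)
The plan is to argue by contradiction, exploiting the three properties of $\kappa_\F$ established earlier: its monotonicity under $\OOG$, its continuity, and a tensor-stability property that I expect to extract from Postulate~\ref{post tensor}. Suppose such a sequence $(W_n)_{n\in\NN}$ together with free Gaussian operations $\Lambda_n\in\O_G$ implementing the transformations $V^{\oplus n}\to W_n$ did exist. The goal is to show that this forces $\kappa_\F(W)\leq\kappa_\F(V)$, in direct contradiction with the hypothesis $\kappa_\F(W)>\kappa_\F(V)$.

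The crucial step, and the one I expect to be the real heart of the matter, is to show that taking direct sums of copies cannot increase the measure, i.e.\ $\kappa_\F(V^{\oplus n})\leq\kappa_\F(V)$ for every $n$. To see this, set $t\coloneqq\kappa_\F(V)$, so that $tV\in\V_\F(\bs{\lambda})$ (the minimizing value is attained thanks to the closedness from Proposition~\ref{prop V}). Since scalar multiplication distributes over the direct sum, $t\,V^{\oplus n}=(tV)^{\oplus n}$, and iterating the covariance-matrix form of Postulate~\ref{post tensor}, namely $\V_\F(\bs{\lambda}) \oplus \V_\F(\bs{\lambda'}) \subseteq \V_\F(\bs{\lambda}\oplus \bs{\lambda'})$, yields $(tV)^{\oplus n}\in\V_\F(\bs{\lambda}^{\oplus n})$. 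Hence $t$ is admissible in the minimization defining $\kappa_\F(V^{\oplus n})$, so $\kappa_\F(V^{\oplus n})\leq t=\kappa_\F(V)$. (In fact Postulate~\ref{post partial trace}, applied by projecting onto a single copy, gives the reverse inequality and thus equality, but only the upper bound is needed here.)

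With this in hand the argument closes quickly. Monotonicity of $\kappa_\F$ under each free Gaussian operation $\Lambda_n$ gives $\kappa_\F(W_n)\leq\kappa_\F(V^{\oplus n})\leq\kappa_\F(V)$ for all $n$. Since $W_n\to W$ and $\kappa_\F$ is continuous by Proposition~\ref{prop:kappa}, passing to the limit yields $\kappa_\F(W)=\lim_{n}\kappa_\F(W_n)\leq\kappa_\F(V)$, the desired contradiction. I therefore expect no machinery beyond the cited propositions: the argument is the familiar ``monotone $+$ tensor-stability $+$ continuity'' no-go scheme, with the distinctively Gaussian ingredient being that the relevant tensor-stability is a statement about \emph{direct sums} of covariance matrices controlled by Postulate~\ref{post tensor}.

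The only points I would treat with care are bookkeeping and topology. First, the subsystem labels must be tracked: each $\Lambda_n$ maps the domain structure $\bs{\lambda}^{\oplus n}$ to the codomain structure $\bs{\mu}$ describing $W$ and the $W_n$, and one should check that $\kappa_\F$ in the hypothesis is evaluated with respect to these structures consistently. Second, one must confirm that the continuity of $\kappa_\F$ proven in the appendix is strong enough to interchange with the limit along $W_n\to W$; in fact mere lower semicontinuity of $\kappa_\F$ at $W$ would already suffice to conclude $\kappa_\F(W)\leq\liminf_n\kappa_\F(W_n)\leq\kappa_\F(V)$, so the full continuity statement is more than adequate.
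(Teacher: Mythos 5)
Your proposal is correct and follows essentially the same route as the paper: the paper invokes its tensorization lemma $\kappa_\F(V\oplus W)=\max\{\kappa_\F(V),\kappa_\F(W)\}$ (whose ``$\leq$'' half you re-derive inline from Postulate~\ref{post tensor} at the covariance-matrix level), then applies monotonicity under $\OOG$ and continuity of $\kappa_\F$ to pass to the limit $W_n\to W$ and reach the contradiction $\kappa_\F(W)\leq\kappa_\F(V)$. Your observations that attainment of the minimum in $\kappa_\F$ rests on closedness and that lower semicontinuity alone would suffice are both accurate refinements of the same argument.
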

The proof of the above theorem relies on a special property of the measure $\kappa_\F$ that we could call, borrowing terminology from classical probability theory, \emph{tensorization property}~\cite{beigi_2016}: for all resource theories in consideration, $\kappa_\F$ does not change when multiple copies of a quantum state are considered; more generally, we have $\kappa_\F \left(V \oplus W\right) = \max\{ \kappa_\F(V), \kappa_\F (W)\}$ for any two covariance matrices $V,W$ (Lemma~\ref{kappa multi-copy}). This, together with the monotonicity of $\kappa_\F$, immediately implies that distillation is impossible since we cannot increase $\kappa_\F$ with free Gaussian operations.
In the following, we present explicit applications of our framework to a broad set of continuous variable resources, namely squeezing (equivalently, nonclassicality), quantum entanglement manipulated via local operations and classical communication or via operations preserving the positivity of the partial transpose, and steering.

\section{Examples and applications} \label{sec examples}


Quite remarkably, it turns out that in many --- if not all --- physically relevant resource theories, $\V_\F(\bs{\lambda})$ is not only a convex set, but can even be described by means of semidefinite programming (SDP) constraints. Although we leave open the question of whether a general principle can be found from which the existence of such a description follows naturally, we will now characterize the quantification of all resources for which such SDP structure is known to exist. In particular, our results apply to any resource theory satisfying Postulates~\ref{post D invariance}--\ref{post weak closed} whose set of free states can be described by constraints of the kind
\begin{equation}\begin{aligned}
	\V_\F(\bs{\lambda}) = \lset V \in \QCM_{N} \bar V \geq f(Q) + C,\ g(Q)\geq D \rset
\end{aligned} \label{V as SDP}
\end{equation}
where $Q$ is a Hermitian matrix variable of some fixed size, $f$ and $g$ are linear functions, and $C$ and $D$ are constant Hermitian matrices. The main advantage of the representation in Eq.~\eqref{V as SDP} is that the associated quantifier $\kappa_\F$ in Eq.~\eqref{kappa} can then be evaluated via an efficiently computable semidefinite program:
\begin{equation}\label{kappa SDP}
\begin{aligned}
\hspace{-5em}\kappa_\F (V) = \ & \underset{\xi,\, Q}{\text{minimize}}
& & \xi \\
& \text{subject to} & & \xi \, V \geq f(Q) + C\\
&&& g(Q)\geq D \\
&&& \xi \geq 1.
\end{aligned}
\end{equation}

Alternatively, one can choose to introduce the quantity
\begin{equation}\begin{aligned}
	\upsilon_\F(V) \coloneqq \max_{\zeta,\, Q} \lset \zeta \bar V \geq \zeta (f(Q) + C),\ g(Q)\geq D  \rset
\end{aligned}\end{equation}
in which case $\kappa_\F(V) = \max \left\{ 1,\, 1/\upsilon_\F(V)\right\}$. The advantage this formulation is that the dual of the optimization problem $\upsilon_\F$ can be expressed by means of the so-called resource witnesses based on second moments~\cite{hyllus_2006-1}, that is, as an optimization over the expectation values $\Tr (W V)$ at the level of the covariance matrix. Assuming that strong duality for the problem \eqref{kappa SDP} holds (which can be straightforwardly verified for all of the considered resource theories), we then have the SDP
\begin{equation}
\begin{aligned}
\hspace{-5em}\upsilon_\F (V) = \ & \underset{W,\, Y}{\text{minimize}}
& & \Tr( W V )\\
& \text{subject to} & & \Tr (W C) + \Tr (Y D) = 1\\
&&& f^\dagger (W) = g^\dagger(Y) \\
&&& W,Y \geq 0
\end{aligned}
\end{equation}
where $f^\dagger, g^\dagger$ are the adjoint maps, that is, the unique linear maps satisfying $\Tr(f(A)B)=\Tr(A f^\dagger(B))$ for any Hermitian $A,B$.

Many common Gaussian resources can indeed be expressed and quantified in this way --- we will now provide some representative examples of such resources, which we have also collected in Table~\ref{tab:resources}.

\begin{table*}
\begin{ruledtabular}
\begin{tabular}{llllll}
Resource & $f(Q)$ & $g(Q)$ & $C$ & $D$ & Further constraints on $Q$\\\midrule
Bipartite entanglement~\cite{werner_2001} & $Q_A \oplus Q_B$ & $Q_A \oplus Q_B$ & --- & $i \Omega_{A} \oplus i\Omega_{B}$ & $Q_A = Q_A^T$, $Q_B = Q_B^T$\\
Bipartite entanglement (simplified)~\cite{lami_2016} & $Q_A \oplus 0_B$ & $Q_A$ & $0_A \oplus i \Omega_B$ & $i \Omega_A$ & $Q_A = Q_A^T$\\
Negative partial transpose~\cite{werner_2001} & --- & --- & $i\Omega_A \oplus (-i\Omega_B)$  & --- &\\
Steerability ($A\to B$)~\cite{wiseman_2007} & --- & --- & $0_A \oplus i\Omega_B$ & --- &\\
Nonclassicality~\cite{simon_1994} & --- & --- & $\mathbbm{1}$ & --- &
\end{tabular}
\end{ruledtabular}
\caption{Examples of Gaussian resources whose quantification can be represented in the considered framework. The table relates the resources with the notation of Eqs.~\eqref{V as SDP},~\eqref{kappa SDP}. For the sake of clarity of presentation, some additional constraints which one has to impose on the matrix $Q$ have been considered separately in the rightmost column, although they can be explicitly brought into the form of \eqref{V as SDP} by including them in the function $g$.}
\label{tab:resources}
\end{table*}

\subsection{Squeezing (nonclassicality)}
We start by looking at the simplest Gaussian resource theory of all, namely that of squeezing or nonclassicality~\cite{lee_1991, simon_1994, serafini_2005, idel_2016}. The free states of this theory, also called \emph{classical} states from now on, are simply convex mixtures of coherent states. Within this framework, the goal is usually that of preparing squeezed states, which may be useful for some practical (e.g.~metrological) tasks~\cite{SchnabelPhysRep}. It is not difficult to see that the continuous variable resource theory of squeezing obeys all the Postulates we presented. The free operations include in particular passive transformations, obtained by concatenating~\cite{yadin_2018}: (i) the addition of ancillae in classical Gaussian states; (ii) passive unitaries, defined as those symplectic unitaries that preserve the total photon number; and (iii) destructive Gaussian measurements.
These operations are relatively cheap to realize experimentally, as passive unitaries can always be implemented by combining beam splitters and phase shifters~\cite{reck_1994}.

Restricting to the Gaussian setting, free states in this theory admit a remarkably simple description in terms of their covariance matrices, for $\rho_G[V,s]$ is a classical state if and only if $V\geq \id$~\cite{simon_1994}. This gives us the simple form
\begin{equation}
\begin{aligned}
\hspace{-5em}\kappa_\C(V) = \,& \underset{\xi\, \geq\, 1}{\text{minimize}}
& & \xi \\
& \text{subject to} & & \xi \, V \geq  \id
\end{aligned}
\end{equation}
which can be easily seen to be exactly computable as $\kappa_\C(V) = \max \left\{ 1,\, 1/\lambda_{\min}(V) \right\}$, where $\lambda_{\min}$ denotes the minimal eigenvalue. Our main result in Thm.~\ref{thm:no-go} then establishes a no-go result about the convertibility of nonclassical Gaussian states under all operations preserving the set of classical states, and in particular passive operations.

\subsection{Entanglement}
The resource theory of quantum entanglement is another example of a theory for which all of the Postulates hold~\cite{brandao_2015, adesso_2014}. Focusing on bipartite entanglement between parties $A$ and $B$ for simplicity, the set of free states is formed by the separable states $\S(A|B)$. The most operationally relevant set of free operations includes all transformations that are implementable as local operations assisted by classical communication (LOCC), and is a strict subset of all the resource non-generating (separability-preserving) operations.

In our Gaussian setting, the set $\V_\S$ can be described by semidefinite constraints of the form
\begin{equation}\begin{aligned}
	\V_\S(A|B) = \lset V \in \QCM_{N_{AB}} \bar V \geq \gamma_A \oplus \gamma_B,\ \gamma_{i} \in \QCM_{N_{i}} \rset.
\end{aligned}
\end{equation}
which can be easily expressed in the form of Eq.~\eqref{V as SDP} (see Table~\ref{tab:resources}). The associated measure $\kappa_\S$ can then be computed as a semidefinite program~\cite{hyllus_2006-1}, and corresponds to the inverse of a quantifier studied in~\cite{giedke_2002}.

Notice that Thm.~\ref{thm:no-go} includes as a particular case the result of~\cite{giedke_2002}, showing the impossibility of entanglement distillation with Gaussian LOCC: in fact, it readily generalizes the result by showing that distillation with Gaussian separability-preserving operations is also impossible.

We can strengthen the result even further by relating the resource theory of entanglement to the one of negative partial transpose, in which the free states $\P(A|B)$ are those with positive partial transpose across the cut $A|B$. This set can also be obtained from Eq.~\eqref{V as SDP} as
\begin{equation}\begin{aligned}
	\V_\P(A|B) = \lset V \in \QCM_{N_{AB}} \bar V \geq i\Omega_A \oplus (-i\Omega_B) \rset.
\end{aligned}
\end{equation}
Here, the quantifier $\kappa_\P$ admits an analytical characterization as $\kappa_\P(V_{AB}) = \max\{ 1, 1/\nu_{\min} (\widetilde{V}_{AB})\}$ with $\nu_{\min}(\widetilde{V}_{AB})$ being the smallest symplectic eigenvalue of the partially transposed covariance matrix~\cite{lami_2016}. We then notice that, for any sequence of states $\rho(n)$ which approaches the maximally entangled state in the limit $n \to \infty$, we have $\lim_{n\to\infty} \kappa_{\S}(\rho(n)) =\infty = \lim_{n\to\infty} \kappa_{\P}(\rho(n))$, and therefore the distillation of entanglement would necessarily involve increasing $\kappa_{\P}$. By Thm.~\ref{thm:no-go}, we get that Gaussian entanglement distillation is impossible even with Gaussian operations preserving the positivity of the partial transpose. Among those operations --- which can be strictly more powerful than LOCC alone --- there are for instance those transformations implementable by means of Gaussian LOCC assisted by an \emph{unlimited} supply of bound entangled Gaussian states~\cite{werner_2001}.

We further remark that the characterization of the set of separable Gaussian states and their corresponding covariance matrices can be simplified to~\cite{lami_2016}
\begin{equation}\begin{aligned}
	\V_\S(A|B) = \lset V \in \QCM_{N_{AB}} \bar V \geq \gamma_A \oplus i\Omega_B,\ \gamma_{A} \in \QCM_{N_{A}} \rset
\end{aligned}
\end{equation}
which in particular means that the computation of the quantifier $\kappa_\S$ can be performed by optimizing only over one of the subsystems --- this has particular implications for the case where one of the subsystems has a larger dimension than the other, simplifying the computation of the relevant quantities. For completeness, we give the full forms of the measures $\kappa_\S$ and $\upsilon_\S$ simplified in this way in Appendix~\ref{app SDP}.

\subsection{Steering}
Another fundamental resource theory is based on the phenomenon of EPR steering~\cite{schrodinger_1935,reid_1989}, in which party $A$ can exploit quantum correlations to influence the state of another party $B$ by only performing measurements on $A$'s subsystem. In resource-theoretic approaches to steering~\cite{wiseman_2007,gallego_2015,piani_2015}, the free states are referred to as $A\!\to\! B$ unsteerable, and free operations are commonly chosen to be one-way LOCC, reflecting the asymmetric nature of steering.
Steering admits a simplified characterization when restricted to Gaussian measurements, allowing for a dedicated resource theory of Gaussian steering to be established~\cite{wiseman_2007,kogias_2015,ji_2015,lami_2016-1,xiang_2017}. It turns out that the set of free states $\T_{A \to B}$ that are unsteerable by Gaussian measurements on $A$ can be described as~\cite{wiseman_2007}
\begin{equation}\begin{aligned}
	\V_\T(\bs{\lambda}) = \lset V \in \QCM_{N_{AB}} \bar V \geq 0_A\oplus i\Omega_B \rset.
\end{aligned}
\end{equation}
It is then easy to verify that our Postulates are satisfied, and the no-go result of Thm.~\ref{thm:no-go} holds also for the Gaussian resource theory of steering --- that is, the distillation of steering from Gaussian states is impossible by Gaussian steering non-generating operations, with the latter including all relevant classes of free operations such as one-way Gaussian LOCC. We remark that in this case the quantifier $\kappa_\T$ can be computed as
\begin{equation}
\begin{aligned}
\hspace{-5em}\kappa_\T(V_{AB}) = \,& \underset{\lambda \geq 1}{\text{minimize}}
& & \lambda\\
& \text{subject to} & & \lambda \, V_{AB} / V_A \geq i \Omega_B,
\end{aligned}
\end{equation}
with $V_{AB}/V_A$ denoting the Schur complement, which admits an analytical characterization as $\kappa_\T(V_{AB}) = \max \{ 1, \, 1/\nu_{\min}(V_{AB}/V_A) \}$. In the particular case when system $B$ consists of only one mode, $\log \kappa_\T$ is equal to a previously introduced quantifier of Gaussian steering~\cite{kogias_2015}.

\section{Conclusions} \label{sec conclusions}

We have introduced a framework for the characterization of general CV Gaussian quantum resource theories satisfying a set of intuitive constraints on their set of free states. The approach allowed us to describe many important resources such as entanglement, steering, and nonclassicality together in a common formalism, obtaining novel results in the characterization of the resources as well as shedding light onto their properties. In particular, we showed that the task of resource distillation is impossible with free Gaussian operations in the given resource theories, by proving specifically that, by such operations, one cannot convert (even infinitely many copies of) a Gaussian state into another Gaussian state with a higher resource content as quantified by the resource monotone defined in this paper. This establishes fundamental limitations of the Gaussian paradigm for state transformations. 

An interesting open question is whether some sort of converse of Thm.~\ref{thm:no-go} holds. Namely, given any Gaussian resource theory and two covariance matrices $V,W$ such that $\kappa_\F (V) \geq \kappa_\F (W)$, is it always possible to
convert a large number of copies of $V$ into a single copy of $W$ with Gaussian resource non-generating operations? Even more ambitiously, can the transformation $V\rightarrow W$ happen with asymptotic non-zero rate, if one allows for vanishing errors? These questions will be explored in further work.

In summary, our results are a step forward in the characterization of general quantum resources, bridging the gap between the different approaches to finite- and infinite-dimensional settings, and elucidating the power of Gaussian states and operations in quantum information processing.
Our work opens an avenue for further investigation of many aspects of CV resources, including a complete characterization of state transformations as well as operational tasks and protocols such as resource distillation and dilution beyond Gaussianity.

\vspace{1ex}
\emph{Acknowledgements.} We are grateful to Giacomo De Palma, Vittorio Giovannetti and Krishna Kumar Sabapathy for useful discussions on this and related topics. We acknowledge financial support from the European Research Council (ERC) under the Starting Grant GQCOP (Grant No.~637352) and from the Foundational Questions Institute under the Physics of the Observer Programme (Grant No.~FQXi-RFP-1601). AW was supported by the ERC Advanced Grant IRQUAT, and the Spanish MINECO, project FIS2016-86681-P. LL and BR contributed equally to this work.

\bibliography{main}


\appendix
\setcounter{thm}{0}
\renewcommand{\theequation}{A\arabic{equation}}
\renewcommand{\therem}{A\arabic{rem}}
\renewcommand{\thedefinition}{A\arabic{definition}}
\renewcommand{\theex}{A\arabic{ex}}
\renewcommand{\thethm}{A\arabic{thm}}
\renewcommand{\thecor}{A\arabic{cor}}
\renewcommand{\thelemma}{A\arabic{lemma}}
\renewcommand{\theprop}{A\arabic{prop}}
\renewcommand{\thequestion}{A\arabic{question}}
\renewcommand{\thecj}{A\arabic{cj}}

\section{Topology of Gaussian states} \label{app topoogy}

\subsection{Notation and definitions}

For completeness, we recall the relevant definitions and concepts. Consider a continuous variable system of $n$ modes, for which we adopt the so-called real notation. In what follows, we reserve the letter $r$ for the column vector formed by the $n$ pairs of canonically conjugated field operators, sorted as
\begin{equation}
r\coloneqq (x_{1}, \ldots, x_{n}, p_{1}, \ldots, p_{n})^{T}\, .
\label{r}
\end{equation}
Here, the transposition sign refers only to the phase space degrees of freedom, and does not act on the Hilbert space.
With the help of this notation, the \emph{canonical commutation relations} $[x_j, p_k] = i \delta_{jk}$\ can be rewritten in a compact vector form as
\begin{equation}
    [r,r^T] = i \Omega \coloneqq i \begin{pmatrix} 0 & \id \\ - \id & 0 \end{pmatrix}\, .
    \label{Omega}
\end{equation}
The displacement operator associated with $\xi\in \mathds{R}^{2n}$ is given by $D(\xi) \coloneqq e^{i \xi^{T} \Omega r}$ and satisfies the identity
\begin{equation}
D(\xi_{1}) D(\xi_{2}) = e^{-\frac{i}{2} \xi_{1}^{T}\Omega \xi_{2}} D(\xi_{1}+\xi_{2})\, ,
\label{Weyl}
\end{equation}
referred to as the \emph{Weyl form of the canonical commutation relations.} Observe that $D(\xi)^\dag=D(-\xi)$ for all real vectors $\xi$.

The displacement operators can be used to generate the notable set of coherent states. For $u\in\mathds{R}^{2n}$, one defines
\begin{equation}
    \ket{u} \coloneqq D(u) \ket{0}\, ,
    \label{coherent}
\end{equation}
where $\ket{0}$ is the vacuum state. Applying the Campbell-Baker-Hausdorff formula to the exponential that defines the displacement operator, it is not too difficult to show that
\begin{equation}
    \braket{0|u} = \braket{0|D(u)|0} = e^{-\frac14 u^T u}\, .
    \label{D on 0}
\end{equation}

Coherent states are just particular examples of Gaussian states, defined as thermal states of quadratic Hamiltonians. We denote the set of Gaussian states of an $n$-mode system by $\mathcal{G}_n$. Remember that Gaussian states can be uniquely identified by their first and second moments, respectively given by
\begin{align}
    s &\coloneqq \Tr [\rho\, r]\, \label{first} \\
   V_{jk} &\coloneqq \Tr \left[ \rho \left\{(r-s)_j, (r-s)_k \right\} \right] . \label{second}
\end{align}
Here, the anticommutator $\{H,K\}\coloneqq HK+KH$ is needed in order to make the above expression real. While any vector $s\in \mathds{R}^{2n}$ can represent the first moments of an $n$-mode Gaussian state, it is well-known that the entries of a real symmetric matrix $V$ are the second moments of some Gaussian state if and only if
\begin{equation}
V\geq i\Omega\, ,
\label{Heisenberg}
\end{equation}
the above relation encoding the constraints coming from Heisenberg's uncertainty principle in this context. Real symmetric matrices satisfying Eq.~\eqref{Heisenberg} are called quantum covariance matrices in what follows. It can be shown that every such matrix is necessarily strictly positive, i.e., Eq.~\eqref{Heisenberg} implies that $V>0$.

For every trace class operator $T$, it is convenient to define its \emph{characteristic function}
\begin{equation}
\chi_{T}(\xi) \coloneqq \Tr[T D(\xi)]\, .
\label{chi}
\end{equation}
The operator can be reconstructed from its characteristic functions by means of the following relation~\cite[Cor. 5.3.5]{holevo_2011}:
\begin{equation}
T = \int \frac{d^{2n}\xi}{(2\pi)^{n}} \, \chi_{T}(\xi)\, D(-\xi)\, ,
\label{integral}
\end{equation}
where the integral converges in the weak topology, see for instance~\cite[Cor. 5.3.5]{holevo_2011}. For more on what this means, see below.

It can be shown that the characteristic function of a Gaussian state $\rho_G[V,s]$ takes the form~\cite[Eq. (4.48)]{serafini_2017}
\begin{equation}
    \chi_{\rho_G[V,s]}(\xi) = \Tr[\rho_G[V,s]\, D(\xi)] = e^{-\frac14 \xi^T \Omega^T V \Omega \xi + i s^T \Omega \xi}\, .
    \label{chi G}
\end{equation}
Up to a change of variables, Eq.~\eqref{integral} can then be rewritten as follows:
\begin{equation}
    \rho_G[V,s] = \int \frac{d^{2n}\xi}{(2\pi)^n}\, e^{-\frac14 \xi^T V \xi - i s^T\xi} D(\Omega \xi)\, ,
    \label{integral G}
\end{equation}
where the integral converges weakly, see again~\cite[Cor. 5.3.5]{holevo_2011}. Among the other things, from Eq.~\eqref{integral} and~\eqref{chi G} it can be appreciated, that Gaussian states are exactly those quantum states whose characteristic function is a (multivariate) Gaussian.

A useful formula that we will employ in what follows gives the action of a random displacement on a Gaussian state: for all $K>0$, one has
\begin{equation}
    \int d^{2n}\xi\, \frac{e^{-\xi^T K^{-1} \xi}}{\pi^n \sqrt{\det K}}\ D(\xi)\, \rho_G [V,s]\, D(\xi)^\dag\, =\, \rho_G[V+K, s]\, ,
    \label{classical mixing}
\end{equation}
again in the sense of weak convergence. This can be seen as an immediate consequence of Eq.~\eqref{integral G}.

\subsection{Closedness and continuity results}

Let $\mathcal{H}_n$ be the Hilbert space associated with a finite number $n$ of harmonic oscillators, and let $\mathcal{T}(\mathcal{H}_n)$ be the set of trace class operators over $\mathcal{H}_n$. Observe that $\mathcal{T}(\mathcal{H}_n)$ becomes a Banach space once it is equipped with the trace norm $\|\cdot\|_1$, and its Banach dual is well known to be identifiable with the set of bounded operators, denoted by $\mathcal{B}(\mathcal{H}_n)$. Let us stress here that this is by no means a mathematical concept only. On the contrary, in quantum mechanics $\mathcal{B}(\mathcal{H}_n)$ has a physical interpretation as the set of all observables on the system.

In general, given a Banach space $E$ it is always possible to consider its (Banach) \emph{dual}, i.e., the space $E^*$ of all continuous linear functionals $\varphi: E\rightarrow \mathds{C}$.
Remember that a linear functional is continuous if and only if it is bounded, i.e., if and only if $\sup_{x\in E,\, \|x\|\leq 1} |\varphi(x)|$ is finite. The Banach dual can be used to induce another topology which is of interest, i.e., the \emph{weak topology}, defined as the coarsest topology that makes all the functionals in $E^*$ continuous. As a matter of fact, the topologies on $E$ such that the corresponding continuous dual is $E^*$ are exactly those that are coarser than the norm topology (induced by the norm on $E$) and finer than the weak topology. This is a special case of the Mackey-Arens theorem~\cite[Thm.~8.14]{aliprantis_2007}.
For a discussion of these concepts, see~\cite[Sec.~2.5]{megginson_1998} or~\cite[Sec.~3.11]{rudin_1991}.

If $E$ is infinite-dimensional it can be shown that the weak topology is always different (in fact, as the name suggests, strictly coarser) than the norm topology. Hence, when it comes to taking closures (something we shall be concerned with) one has to specify which topology is used, as in general the weak closure will be larger than the norm closure. However, this is not always the case. Indeed, there is an important class of sets for which weak and norm closure always coincide, i.e., that of convex sets (see~\cite[Thm.~2.5.16]{megginson_1998} or~\cite[Sec.~3.12]{rudin_1991}). By the above discussion, it should be clear by now that all topologies on a Banach space $E$ such that the corresponding continuous dual coincides with the Banach dual $E^*$ have in fact the same closed convex sets.

The Banach space we care about here is $\mathcal{T}(\mathcal{H}_n)$, hence the norm topology is induced by the trace norm $\|\cdot\|_1$, and the weak topology is nothing but the the coarsest topology that makes all linear functionals $\Tr [A (\cdot)] : \mathcal{T}(\mathcal{H}_n)\rightarrow \mathds{C}$ continuous, for all $A\in \mathcal{B}(\mathcal{H}_n)$.
Inside $\mathcal{T}(\mathcal{H}_n)$ lies the set of Gaussian states, denoted by $\mathcal{G}_n$, where $n$ is the number of modes. It is not completely trivial to show that $\mathcal{G}_n$ is norm-closed, and so we first show this result below.

\begin{lemma} \label{lemma G closed}
The set of Gaussian states $\mathcal{G}_n\subset \mathcal{T}(\mathcal{H}_n)$ is closed with respect to the topology induced by the trace norm.
\end{lemma}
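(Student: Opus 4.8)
The plan is to reduce the statement to a claim about characteristic functions and then extract a convergent subsequence of covariance data. Let $(\rho_k)_k \subset \mathcal{G}_n$ with $\rho_k = \rho_G[V_k, s_k]$ and suppose $\|\rho_k - \rho\|_1 \to 0$ for some $\rho \in \T(\H_n)$. First I would observe that $\rho$ is automatically a state: positivity and unit trace survive trace-norm limits, since $\langle\psi|\rho|\psi\rangle = \lim_k \langle\psi|\rho_k|\psi\rangle \geq 0$ and $|\Tr\rho - 1| = |\Tr(\rho-\rho_k)| \leq \|\rho-\rho_k\|_1 \to 0$. Because the characteristic function map $T \mapsto \chi_T$ is injective on $\T(\H_n)$ (two trace-class operators with equal characteristic functions coincide, by the reconstruction formula~\eqref{integral}), it suffices to show that $\chi_\rho$ has the Gaussian form~\eqref{chi G}; the recovered pair $(V,s)$ then identifies $\rho = \rho_G[V,s]$. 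Note that trace-norm convergence gives uniform convergence of characteristic functions: since $D(\xi)$ is unitary, $|\chi_{\rho_k}(\xi) - \chi_\rho(\xi)| = |\Tr[(\rho_k-\rho)D(\xi)]| \leq \|\rho_k-\rho\|_1$ for all $\xi$, so $\chi_{\rho_k} \to \chi_\rho$ pointwise, and $\chi_\rho$ is continuous with $\chi_\rho(0)=\Tr\rho = 1$.

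The crux is to control the quadratic part. By~\eqref{chi G}, $|\chi_{\rho_k}(\xi)| = \exp(-\tfrac14 \xi^T \Omega^T V_k \Omega\, \xi)$, and the matrices $A_k := \Omega^T V_k \Omega$ are positive definite since each $V_k > 0$. By continuity of $\chi_\rho$ at the origin there is a neighborhood of $0$ on which $|\chi_\rho| \geq \tfrac12$; for $\xi$ in this neighborhood $\xi^T A_k \xi = -4\log|\chi_{\rho_k}(\xi)| \to -4\log|\chi_\rho(\xi)| < \infty$. Evaluating at $\xi = t e_j$ for small $t$ shows each diagonal entry $(A_k)_{jj}$ converges, hence is bounded, and positivity gives $|(A_k)_{ij}| \leq \sqrt{(A_k)_{ii}(A_k)_{jj}}$, so $(A_k)_k$ is bounded. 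By Bolzano--Weierstrass it has convergent subsequences; any subsequential limit $A$ satisfies $\xi^T A\xi = -4\log|\chi_\rho(\xi)|$ on the neighborhood, which by polarization determines $A$ uniquely. Hence the full sequence converges, $A_k \to A$ and $V_k \to V := \Omega A \Omega^T$. Since $\{V : V\geq i\Omega\}$ is closed and $V_k \geq i\Omega$, the limit obeys $V \geq i\Omega$, so $V$ is a quantum covariance matrix; moreover $|\chi_\rho(\xi)| = \exp(-\tfrac14\xi^T\Omega^T V\Omega\,\xi)$ for all $\xi$, in particular $\chi_\rho$ never vanishes. I expect this boundedness step to be the main obstacle, as a priori the Gaussians could degenerate (infinite squeezing) or spread without bound; it is precisely the continuity and normalization of the limit that rule this out.

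It remains to recover the linear term. Since $\chi_\rho$ is nowhere zero, I would set $\psi(\xi) := \chi_\rho(\xi)\exp(\tfrac14\xi^T\Omega^T V\Omega\,\xi)$, a continuous function of unit modulus. Using $A_k \to A$ in~\eqref{chi G}, one has $\psi(\xi) = \lim_k \exp(i s_k^T\Omega\xi)$ pointwise, so $\psi(\xi+\eta) = \psi(\xi)\psi(\eta)$; thus $\psi$ is a continuous character of $(\RR^{2n},+)$ and therefore equals $\exp(i s^T\Omega\xi)$ for a unique $s\in\RR^{2n}$. Combining, $\chi_\rho(\xi) = \exp(-\tfrac14\xi^T\Omega^T V\Omega\,\xi + i s^T\Omega\xi) = \chi_{\rho_G[V,s]}(\xi)$, whence $\rho = \rho_G[V,s] \in \mathcal{G}_n$ by injectivity. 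This proves that $\mathcal{G}_n$ is norm-closed.
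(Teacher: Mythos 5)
Your proof is correct, and while it shares the paper's overall architecture (establish compactness of the moment data, then identify the limit through its characteristic function), the two decisive steps are carried out by genuinely different means. The paper proves boundedness of $V(k)$ and $s(k)$ by evaluating the coherent-state matrix elements $\braket{u|\rho_G[V(k),s(k)]|u}$ via an explicit Gaussian integral and arguing that unboundedness of either sequence would force $\braket{u|\rho|u}=0$ for all $u$, hence $\rho=0$; it then passes to simultaneously convergent subsequences of both $V(k)$ and $s(k)$. You instead read the boundedness of $\Omega^T V_k\Omega$ directly off the modulus $|\chi_{\rho_k}(\xi)|=e^{-\frac14\xi^T\Omega^T V_k\Omega\xi}$ near the origin, using only $\chi_\rho(0)=1$ and continuity, and you upgrade subsequential to full convergence of $V_k$ by polarization. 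More notably, you never need to bound or extract a limit of the displacement vectors $s_k$ at all: the residual phase $\psi(\xi)=\chi_\rho(\xi)e^{\frac14\xi^T\Omega^T V\Omega\xi}$ is a pointwise limit of characters of $(\RR^{2n},+)$, hence (being continuous and unimodular) itself of the form $e^{is^T\Omega\xi}$. This is a clean shortcut that the paper's route does not use; conversely, the paper's coherent-state computation is more elementary in that it avoids invoking the classification of continuous characters. Both arguments are complete and rigorous; yours has the minor additional payoff of showing that $V_k\to V$ along the full sequence.
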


\begin{proof}
We have to show that given a sequence $\rho^{(k)}_G$ of Gaussian states with the property that $\lim_k \|\rho^{(k)}_G - \rho\|_1$ for some trace class operator $\rho$, we have that $\rho$ itself is a Gaussian state. In what follows, we denote by $V(k)$ and $s(k)$ the covariance matrix and displacement vector of $\rho_G^{(k)}$, respectively, so that $\rho_G^{(k)}=\rho_G[V(k), s(k)]$.

The first step in the proof consists in showing that $V(k)$ and $s(k)$ are bounded sequences, i.e., that there exists $M\in\mathds{R}$ such that $\|V(k)\|_\infty, |s(k)|_2 \leq M$ for all $k$ (where $\|\cdot\|_\infty$ is the operator norm, and $|\cdot|_2$ the Euclidean norm for vectors). In order to see why, write
\begin{align*}
    &\braket{u| \rho_G[V(k), s(k)] |u} \\
    &\qquad = \int \frac{d^{2n}\xi}{(2\pi)^n}\, e^{-\frac14 \xi^T V(k) \xi - i s(k)^T\xi} \braket{u|D(\Omega \xi)|u} \\
    &\qquad\texteq{(1)} \int \frac{d^{2n}\xi}{(2\pi)^n}\, e^{-\frac14 \xi^T V(k) \xi - i s(k)^T\xi} \braket{0| D(-u) D(\Omega \xi) D(u)|0} \\
    &\qquad\texteq{(2)} \int \frac{d^{2n}\xi}{(2\pi)^n}\, e^{-\frac14 \xi^T V(k) \xi - i s(k)^T\xi} e^{-i u^T \xi} \braket{0| D(\Omega \xi) |0} \\
    &\qquad\texteq{(3)} \int \frac{d^{2n}\xi}{(2\pi)^n}\, e^{-\frac14 \xi^T V(k) \xi - i s(k)^T\xi} e^{-i u^T \xi} e^{-\frac14 \xi^T \xi} \\
    &\qquad= \int \frac{d^{2n}\xi}{(2\pi)^n}\, e^{-\frac14 \xi^T \left( V(k) + \id\right) \xi - i \left(s(k) + u\right)^T \xi} \\
    &\qquad\texteq{(4)} \frac{2^n e^{- (s(k)+u)^T (V(k)+\id)^{-1} (s(k)+u)}}{\sqrt{\det\left( V(k)+\id\right)}}\, .
\end{align*}
The justification of the above steps is as follows: (1) we used the definition of coherent states, Eq.~\eqref{coherent}; (2) we applied Eq.~\eqref{Weyl} twice; (3) we made use of Eq.~\eqref{D on 0}; (4) we performed the Gaussian integral. Now, we take the limit $k\rightarrow \infty$ on both sides of the equality
\begin{equation}
    \braket{u| \rho_G[V(k),s(k)] |u} = \frac{2^n e^{- (s(k)+u)^T (V(k)+\id)^{-1} (s(k)+u)}}{\sqrt{\det\left( V(k)+\id\right)}}\, .
    \label{G on coherent}
\end{equation}

On the left-hand side we get $\braket{u|\rho|u}$ because of the properties of the convergence in norm. Let us now look at the right-hand side. Observe that $\det\left( V(k)+\id\right)\geq \|V(k)\|_\infty+1$, and that the exponential term is at most $1$. If the sequence $V(k)$ were unbounded, then there would exist a subsequence $k_m$ on which $\|V(k_m)\|_\infty\rightarrow \infty$, which implies by the above equality that $\braket{u|\rho|u}=0$. Since this would happen for all $u\in\mathds{R}^{2n}$, we would deduce that $\braket{\psi|\rho|\psi}=0$ for all vectors $\ket{\psi}\in\mathcal{H}_n$, because coherent states are dense, and $\rho$ is a bounded (even trace class) operator. It is elementary to verify that this would imply that $\rho=0$ identically, a contradiction. Hence, we are led to conclude that $V(k)$ must be bounded, i.e., $V(k)\leq M\id$ for some $M\in\mathds{R}$.

This implies immediately that $(V(k)+\id)^{-1}\geq (M+1)^{-1} \id$, hence if the sequence $s(k)$ were unbounded, for every fixed $u$ we could find a subsequence $k_m$ on which $|s(k_m)+u|_2\rightarrow\infty$, which implies that
\begin{equation*}
    e^{-(s(k_m)+u)^T (V(k_m) + \id)^{-1} (s(k_m)+u)}\leq e^{-\frac{1}{M+1} |s(k_m)+u|_2^2}\underset{m\rightarrow\infty}{\longrightarrow} 0\, .
\end{equation*}
Since the determinant appearing in Eq.~\eqref{G on coherent} is always at least $1$, we would deduce that the whole r.h.s.\ of Eq.~\eqref{G on coherent} tends to $0$ on that subsequence, hence that $\braket{u|\rho|u}=0$ for all $u\in\mathds{R}^{2n}$, again a contradiction.

This shows that $V(k)$ and $s(k)$ form bounded sequences. Since they live in finite-dimensional spaces, they will admit two simultaneously convergent subsequences
\begin{align*}
    V(k_m) &\underset{m\rightarrow\infty}{\longrightarrow} V\, , \\
    s(k_m) &\underset{m\rightarrow\infty}{\longrightarrow} s\, .
\end{align*}
Clearly, one still has $\lim_{m\rightarrow\infty} \|\rho_G[V(k_m), s(k_m)] - \rho\|_1=0$. Now, we use this to take the limit $m\rightarrow \infty$ on both sides of the equality
\begin{equation}
    \Tr[\rho_G[V(k_m), s(k_m)]\, D(\xi) ] = e^{-\frac14 \xi^T \Omega^T V(k_m) \Omega \xi + i s(k_m)^T \Omega \xi}\, ,
\end{equation}
which is just a rewriting of Eq.~\eqref{chi G} (here, $\xi\in\mathds{R}^{2n}$ is fixed).
On the left-hand side we have
\begin{equation*}
    \lim_{m\rightarrow \infty} \Tr[\rho_G[V(k_m), s(k_m)]\, D(\xi) ] = \Tr [\rho D(\xi)] = \chi_\rho(\xi)
\end{equation*}
because the convergence of the sequence of states is in trace norm, and $D(\xi)$ is a bounded (even unitary) operator.
On the right-hand side, by our hypotheses
\begin{equation*}
    \lim_{m\rightarrow\infty} e^{-\frac14 \xi^T \Omega^T V(k_m) \Omega \xi + i s(k_m)^T \Omega \xi} = e^{-\frac14 \xi^T \Omega^T V \Omega \xi + i s^T \Omega \xi}\, .
\end{equation*}
The equality above then implies that
\begin{equation*}
    \chi_\rho(\xi) = e^{-\frac14 \xi^T \Omega^T V \Omega \xi + i s^T \Omega \xi}\, ,
\end{equation*}
from which we see that the limit state $\rho$ has a Gaussian characteristic function, hence it is Gaussian.
\end{proof}

There is another continuity result that we shall need in what follows. In a way, this can be considered as a strengthening of~\cite[Lemma 1]{depalma_2015}.

\begin{lemma} \label{lemma V to rho 1-continuous}
Consider a continuous variable system with $n$ degrees of freedom. The map
\begin{equation}
\begin{array}{rcl}
\text{\emph{QCM}}_n \oplus \mathds{R}^{2n} & \longrightarrow & \mathcal{T}(\mathcal{H}_n) \\[1ex]
(V,s) & \longmapsto & \rho_G[V,s]\, ,
\end{array}
\label{V to rho}
\end{equation}
which sends a pair $(V,s)$, where $V$ is a QCM and $s$ a real vector, is continuous with respect to the trace norm. Here, the topology on
\begin{equation*}
\text{\emph{QCM}}_n \oplus \mathds{R}^{2n} \subset \mathcal{M}_{2n}(\mathds{R}) \oplus \mathds{R}^{2n} \simeq \mathds{R}^{(2n)^2+2n}
\end{equation*}
is understood to be the standard one.
\end{lemma}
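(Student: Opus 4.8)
The plan is to prove sequential continuity, which suffices since the domain $\QCM_n\oplus\mathds{R}^{2n}$ carries the Euclidean topology of a finite-dimensional space. So I fix a sequence $(V_k,s_k)\to(V,s)$ in $\QCM_n\oplus\mathds{R}^{2n}$ and aim to show $\|\rho_G[V_k,s_k]-\rho_G[V,s]\|_1\to 0$. The key observation is that all the operators in play are genuine density operators, hence positive with unit trace and in particular uniformly bounded in operator norm by $1$. My strategy is therefore to first establish convergence in the \emph{weak operator topology}, which is comparatively soft, and then upgrade it to trace-norm convergence by exploiting positivity together with the constancy of the trace norm.

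For the weak-operator step I would extend the explicit coherent-state computation already carried out in the proof of Lemma~\ref{lemma G closed}. The same manipulations --- inserting the integral representation~\eqref{integral G}, using the Weyl relation~\eqref{Weyl} and the overlap~\eqref{D on 0}, and performing the resulting Gaussian integral --- yield a closed formula for the off-diagonal matrix elements $\braket{u|\rho_G[V,s]|v}$, which is again a Gaussian expression in $(V,s)$ built from $\det(V+\id)$ and $(V+\id)^{-1}$ (together with the vectors $u,v$). Because every $V\in\QCM_n$ satisfies $V\geq i\Omega$ and hence $V>0$, these quantities depend continuously on $(V,s)$ on the whole domain, so $\braket{u|\rho_G[V_k,s_k]|v}\to\braket{u|\rho_G[V,s]|v}$ for all $u,v\in\mathds{R}^{2n}$. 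Since the coherent states form a total set and the sequence $\rho_G[V_k,s_k]$ is uniformly norm-bounded, convergence of matrix elements on this total set propagates by a standard approximation argument to all pairs of vectors, giving $\rho_G[V_k,s_k]\to\rho_G[V,s]$ in the weak operator topology. Note that folding $s$ directly into this formula avoids having to treat the displacement separately via strong continuity of the Weyl operators.

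The decisive step --- and the one I expect to be the only real obstacle --- is the passage from weak to trace-norm convergence, since in infinite dimensions the two topologies are genuinely different. Here I would invoke Grümm's convergence theorem for Schatten classes: for a sequence of trace-class operators, weak-operator convergence together with convergence of the trace norms implies convergence in trace norm. In our situation $\|\rho_G[V_k,s_k]\|_1=1=\|\rho_G[V,s]\|_1$ for every $k$, so the norm hypothesis is automatic, and the positivity of the states is exactly what legitimizes this upgrade. This yields $\|\rho_G[V_k,s_k]-\rho_G[V,s]\|_1\to 0$ and hence the claimed continuity. As an alternative route worth flagging, the Gaussian overlap formula for $\Tr[\rho_G[V_1,s_1]\,\rho_G[V_2,s_2]]$ is manifestly jointly continuous, which delivers Hilbert--Schmidt convergence for free; however, upgrading even that to trace norm still requires the same positivity-based argument, so a Grümm-type theorem remains the crux of the proof.
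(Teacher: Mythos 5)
Your proof is correct, but it follows a genuinely different route from the paper's. The paper dispatches the lemma in a few lines by combining the Fuchs--van de Graaf inequality $\|\rho-\sigma\|_1\leq 2\sqrt{1-F(\rho,\sigma)^2}$ with the known closed-form expression for the fidelity of two Gaussian states in terms of $(V,s)$, whose continuity is checked ``by direct inspection''; continuity of the map then drops out immediately. You instead first prove weak-operator convergence by reusing the coherent-state Gaussian-integral computation from the proof of Lemma~\ref{lemma G closed} (the matrix elements $\braket{u|\rho_G[V,s]|v}$ are manifestly continuous in $(V,s)$ because $V+\id>0$ throughout $\text{QCM}_n$, and totality of coherent states plus the uniform bound $\|\rho_G[V,s]\|_\infty\leq 1$ propagates this to all matrix elements), and then upgrade to trace-norm convergence via the Gr\"umm/Dell'Antonio-type theorem that a sequence of density operators converging in the weak operator topology to a \emph{density operator} converges in trace norm. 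Your emphasis on positivity is well placed: for general trace-class operators the $p=1$ Kadec--Klee-type upgrade is delicate, whereas for states with fixed unit trace it is a standard fact. The trade-off is clear: the paper's argument is shorter but leans on the nontrivial external fidelity formula of Banchi et al.\ and an unverified continuity claim about it, while yours is more self-contained on the Gaussian side and conceptually cleaner, at the price of importing a standard but nontrivial operator-theoretic convergence theorem. Both are complete proofs; neither has a gap.
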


\begin{proof}
We have to show that whenever $\lim_{k\rightarrow \infty} V(k) = V$ and $\lim_{k\rightarrow \infty} s(k)=s$ one has also $\lim_{k\rightarrow\infty} \|\rho_G[V(k),s(k)]-\rho_G[V,s]\|_1=0$. At first glance we seem to have a problem here, as the trace distance of two Gaussian states is not a handy object when dealt with from the phase space perspective. However, we can exploit the Fuchs-van de Graaf's inequality $\|\rho-\sigma\|_1\leq 2\sqrt{1-F(\rho,\sigma)^2}$ to upper bound the trace distance by means of a fidelity-based quantity. The fidelity between two Gaussian states happens to have an explicit expression in terms of their first and second moments~\cite[Eq. (9)-(14)]{banchi_2015}. One can verify by direct inspection that this is continuous with respect to the involved covariance matrices and displacement vectors, and of course it reduces to $1$ when the first and second moments of the first state coincide with those of the second state. Hence,
\begin{align*}
    &\lim_{k\rightarrow\infty} \|\rho_G[V(k),s(k)]-\rho_G[V,s]\|_1 \\
    &\qquad \leq \lim_{k\rightarrow\infty} 2\sqrt{1-F(\rho_G[V(k), s(k)],\rho_G[V,s])^2} \\
    &\qquad= 2\sqrt{1- \left( \lim_{k\rightarrow\infty} F(\rho_G[V(k), s(k)],\rho_G[V,s]) \right)^2} \\
    &\qquad= 2\sqrt{1- \left( F(\rho_G[V, s],\rho_G[V,s]) \right)^2} \\
    &\qquad= 2\sqrt{1- \left( 1 \right)^2} \\
    &\qquad= 0\, ,
\end{align*}
as claimed.
\end{proof}


\section{Gaussian resources} \label{app G resources}

\subsection{Free states}

\begin{lemma} \label{lemma F closed}
Let $\tau$ be a linear topology on $\mathcal{T}(\mathcal{H}_N)$ (the space of trace-class operators) such that the corresponding continuous dual is $\left(\tau, \mathcal{T}(\mathcal{H}_N)\right)' =\mathcal{B}(\mathcal{H}_N)$ (the space of bounded operators). If Postulates~\ref{post convexity} and~\ref{post weak closed} hold, then all sets of free states $\mathcal{F}(\bs{\lambda})$ are closed with respect to $\tau$.
\end{lemma}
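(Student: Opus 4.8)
The plan is to reduce the statement to a standard fact of functional analysis, namely that a convex set in a Banach space has the same closure with respect to any two linear topologies that share the same continuous dual. First I would record the two structural inputs supplied by the hypotheses: by Postulate~\ref{post convexity} the set $\mathcal{F}(\bs{\lambda})\subseteq \mathcal{T}(\mathcal{H}_N)$ is convex, and by Postulate~\ref{post weak closed} it is closed with respect to the trace-norm topology. Recall also that the trace-norm topology on $\mathcal{T}(\mathcal{H}_N)$ is a linear topology whose continuous dual is, by the identification of the Banach dual discussed above, exactly $\mathcal{B}(\mathcal{H}_N)$.

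Next I would invoke the underlying separation principle. For a convex set $C$ in a locally convex space, the closure $\overline{C}$ can be written as the intersection of all closed half-spaces $\{x : \Re\varphi(x)\le c\}$ that contain $C$, where $\varphi$ ranges over the continuous dual and $c\in\mathds{R}$; this is a direct consequence of the Hahn--Banach separation theorem. Since this description of $\overline{C}$ depends on the topology only through the set of continuous functionals, any two topologies with the same continuous dual assign to $C$ the same closure. This is precisely the content of Mackey's theorem~\cite[Thm.~8.9]{aliprantis_2007} (see also~\cite[Thm.~2.5.16]{megginson_1998}), which guarantees that topologies agreeing on their continuous functionals share the same closed convex sets.

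Applying this to the case at hand, the hypothesis is that $\tau$ is a linear topology with $\left(\tau, \mathcal{T}(\mathcal{H}_N)\right)'=\mathcal{B}(\mathcal{H}_N)$, which coincides with the continuous dual of the trace-norm topology. Therefore the $\tau$-closure and the norm-closure of the convex set $\mathcal{F}(\bs{\lambda})$ agree. Since $\mathcal{F}(\bs{\lambda})$ is already norm-closed by Postulate~\ref{post weak closed}, it equals its own norm-closure, hence also its $\tau$-closure, and is therefore $\tau$-closed, as claimed.

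I expect the only delicate point to be the implicit appeal to local convexity underlying the Hahn--Banach separation step: the representation of $\overline{C}$ as an intersection of half-spaces requires the ambient topology to be locally convex. This is not a genuine obstruction, since the relevant topologies of interest (weak, Mackey, and trace-norm) are all locally convex, and the cited form of Mackey's theorem is stated precisely for topologies consistent with a fixed dual pair; by the Mackey--Arens theorem~\cite[Thm.~8.14]{aliprantis_2007} every such compatible topology is sandwiched between the weak and the Mackey topology, so the conclusion applies uniformly. The remaining steps are purely formal.
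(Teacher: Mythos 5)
Your proof is correct and follows essentially the same route as the paper: both arguments reduce the claim to the fact that a norm-closed convex set is closed in every linear topology sharing the same continuous dual, via Hahn--Banach separation (Mazur's theorem). The paper sidesteps the local-convexity worry you raise by simply observing that any $\tau$ with dual $\mathcal{B}(\mathcal{H}_N)$ is finer than the weak topology, so it suffices to prove weak closedness --- note that for your argument too only the inclusion $\cl_\tau \mathcal{F}(\bs{\lambda}) \subseteq \cl_{\|\cdot\|_1} \mathcal{F}(\bs{\lambda})$ is needed, and that inclusion holds because the separating half-spaces are $\tau$-closed regardless of whether $\tau$ itself is locally convex.
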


\begin{proof}
Since the weak topology on $\mathcal{T}(\mathcal{H}_N)$ is by definition the coarsest topology that makes all functionals $\Tr[A(\cdot)]:\mathcal{T}(\mathcal{H}_N)\rightarrow\mathds{C}$ continuous (where $A\in\mathcal{B}(\mathcal{H}_N)$ is generic), any topology $\tau$ that satisfies the hypothesis will be finer than the weak topology. Thus, it suffices to show that all sets $\mathcal{F}(\bs{\lambda})$ are weakly closed.
This follows since $\mathcal{F}(\bs{\lambda})$ are norm-closed and convex by assumption, and weak closure and norm closure always coincide for convex sets by Mazur's theorem (see e.g.~\cite[Thm.~2.5.16]{megginson_1998} or~\cite[Sec.~3.12]{rudin_1991}).
\end{proof}

\begin{lemma} \label{lemma FG closed}
When Postulate~\ref{post weak closed} holds, the set of Gaussian free states $\mathcal{F}_G(\bs{\lambda})$ is norm-closed.
\end{lemma}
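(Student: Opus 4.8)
The plan is to recognise that the set of Gaussian free states is simply an intersection of two sets whose norm-closedness has already been secured, so that the statement reduces to the elementary fact that a finite intersection of closed sets is closed. Concretely, I would begin by recalling the defining identity $\F_G(\bs{\lambda}) = \F(\bs{\lambda}) \cap \mathcal{G}_N$, with $N = \sum_j n_j$, which is available because both sets live inside the Banach space $\T(\H_N)$ equipped with the trace norm.

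From there the two ingredients are immediate. First, Postulate~\ref{post weak closed} states precisely that $\F(\bs{\lambda}) \subseteq \T(\H_N)$ is norm-closed. Second, Lemma~\ref{lemma G closed} established that the set of Gaussian states $\mathcal{G}_N$ is norm-closed in $\T(\H_N)$. Since an intersection of norm-closed subsets of a fixed Banach space is again norm-closed, we conclude that $\F_G(\bs{\lambda})$ is norm-closed, as claimed. If one prefers a sequential formulation (legitimate here because the trace norm induces a metric), I would instead take any sequence $(\rho^{(k)})_k \subset \F_G(\bs{\lambda})$ with $\rho^{(k)} \to \rho$ in trace norm; Postulate~\ref{post weak closed} forces $\rho \in \F(\bs{\lambda})$, while Lemma~\ref{lemma G closed} forces $\rho \in \mathcal{G}_N$, whence $\rho \in \F(\bs{\lambda}) \cap \mathcal{G}_N = \F_G(\bs{\lambda})$.

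I do not expect any genuine obstacle in this lemma itself: all of the analytic difficulty has been front-loaded into Lemma~\ref{lemma G closed}, whose proof required the nontrivial steps of showing that the covariance matrices and displacement vectors remain bounded along a trace-norm convergent sequence and that the Gaussian form of the characteristic function survives the limit. Once that result is in hand, the present statement follows in a single line, and the only point worth flagging is that one must verify that $\F(\bs{\lambda})$ and $\mathcal{G}_N$ are being regarded as subsets of the same space $\T(\H_N)$ so that the intersection and its closure are taken in a common topology.
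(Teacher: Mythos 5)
Your proof is correct and coincides with the paper's own argument: both identify $\F_G(\bs{\lambda}) = \F(\bs{\lambda}) \cap \mathcal{G}_N$ and invoke Postulate~\ref{post weak closed} together with Lemma~\ref{lemma G closed} to conclude via the closedness of intersections of closed sets. No issues.
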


\begin{proof}
By definition $\mathcal{F}_G(\bs{\lambda}) = \mathcal{F}(\bs{\lambda}) \cap \mathcal{G}_N$. The set $\mathcal{F}(\bs{\lambda})$ is norm-closed by Postulate~\ref{post weak closed}, and the set $\mathcal{G}_N$ of all Gaussian states is also norm-closed by Lemma~\ref{lemma G closed}. Since the intersection of closed sets is closed, we conclude.
\end{proof}

\begingroup
\renewcommand\theprop{\ref{prop V}}
\begin{prop}
If Postulate~\ref{post weak closed} holds, then the set $\mathcal{V}(\bs{\lambda})$ is topologically closed. If also Postulate~\ref{post convexity} holds, then $\mathcal{V}(\bs{\lambda})$ becomes `upward closed', in the sense that $V\in \mathcal{V}(\bs{\lambda})$ and $W\geq V$ implies $W\in\mathcal{V}(\bs{\lambda})$.
\end{prop}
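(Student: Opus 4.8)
The plan is to establish the two claims in sequence, leaning on the continuity of the map $(V,s)\mapsto\rho_G[V,s]$ and on the fact that $\F(\bs{\lambda})$ is not merely norm-closed but also \emph{weakly} closed.

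I would first dispose of topological closedness. Take $(V_k)_k\subset\V_\F(\bs{\lambda})$ with $V_k\to V$ in the standard matrix topology. Since each $V_k\in\QCM_N$ and the conditions $V_k=V_k^T$, $V_k\geq i\Omega$ are closed, the limit satisfies $V\in\QCM_N$. Under Postulate~\ref{post D invariance} the membership $V_k\in\V_\F(\bs{\lambda})$ is equivalent to $\rho_G[V_k,0]\in\F(\bs{\lambda})$, so Lemma~\ref{lemma V to rho 1-continuous} yields $\rho_G[V_k,0]\to\rho_G[V,0]$ in trace norm, and Postulate~\ref{post weak closed} forces the limit $\rho_G[V,0]$ to be free, i.e.\ $V\in\V_\F(\bs{\lambda})$. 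This step is routine.

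For upward closedness, suppose $V\in\V_\F(\bs{\lambda})$ and $W\geq V$; then $W=W^T$ and $W\geq V\geq i\Omega$ place $W$ in $\QCM_N$, and I set $K\coloneqq W-V\geq0$. The key observation is that $\rho_G[W,0]$ is a random displacement of $\rho_G[V,0]$: assuming momentarily that $K>0$, Eq.~\eqref{classical mixing} presents $\rho_G[W,0]=\rho_G[V+K,0]$ as the weakly convergent Gaussian mixture $\int d^{2N}\xi\,p_K(\xi)\,D(\xi)\rho_G[V,0]D(\xi)^\dag$, where $p_K$ is the normalized Gaussian density of Eq.~\eqref{classical mixing}. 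Every state $D(\xi)\rho_G[V,0]D(\xi)^\dag$ is a displaced copy of a free state, hence free by Postulate~\ref{post D invariance}, so $\rho_G[W,0]$ is the barycenter of a probability measure supported entirely on $\F(\bs{\lambda})$.

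The hard part is then to conclude that this barycenter itself lies in $\F(\bs{\lambda})$: one cannot simply cite convexity (Postulate~\ref{post convexity}), since the mixture is a continuous integral rather than a finite convex combination. I would close this gap by a Hahn--Banach separation argument in the weak topology, using that $\F(\bs{\lambda})$ is convex and weakly closed (Lemma~\ref{lemma F closed}). If $\rho_G[W,0]\notin\F(\bs{\lambda})$, some observable $A\in\B(\H_N)$ would satisfy $\Tr[A\rho_G[W,0]]>\sup_{\sigma\in\F(\bs{\lambda})}\Tr[A\sigma]$; but the weak convergence in Eq.~\eqref{classical mixing} gives $\Tr[A\rho_G[W,0]]=\int d^{2N}\xi\,p_K(\xi)\Tr[A\,D(\xi)\rho_G[V,0]D(\xi)^\dag]$, whose integrand never exceeds that supremum because each displaced state is free --- a contradiction. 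Finally, to remove the temporary hypothesis $K>0$, I would apply the strict case to $K+\varepsilon\id>0$, obtaining $W+\varepsilon\id\in\V_\F(\bs{\lambda})$ for all $\varepsilon>0$, and then let $\varepsilon\to0$, invoking the topological closedness established above. The only genuine subtlety is this passage from finite to continuous convex combinations, which is exactly why the weak-closedness of $\F(\bs{\lambda})$ (rather than mere norm-closedness together with convexity) is doing the real work.
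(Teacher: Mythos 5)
Your proof is correct and follows essentially the same route as the paper: topological closedness via the continuity of $(V,s)\mapsto\rho_G[V,s]$ together with the norm-closedness of the free (Gaussian) states, and upward closedness via the random-displacement identity of Eq.~\eqref{classical mixing} with $K=W-V$, reducing the non-strict case to the strict one by an $\varepsilon\id$ perturbation and closedness. Your explicit Hahn--Banach separation step is just an unpacking of what the paper calls ``Gaussian convexity'', which it justifies through the weak closedness of the convex set $\F(\bs{\lambda})$ (Lemma~\ref{lemma F closed}, resting on Mazur's theorem), so the two arguments coincide in substance.
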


\begin{proof}
We first show that $\mathcal{V}(\bs{\lambda})$ is topologically closed. By Lemma~\ref{lemma V to rho 1-continuous}, we know that the map $\Gamma: \text{QCM}_N\rightarrow \mathcal{T}(\mathcal{H}_N)$ whose action is defined by $\Gamma(V) \coloneqq \rho_G[V,0]$ is continuous with respect to the trace norm. With this notation, the set $\mathcal{V}(\bs{\lambda})$ can be rewritten as
\begin{equation*}
    \mathcal{V}(\bs{\lambda}) = \Gamma^{-1} \left( \mathcal{F}_G(\bs{\lambda}) \right)\, .
\end{equation*}
Since $\mathcal{F}_G(\bs{\lambda})$ is norm-closed by Lemma~\ref{lemma FG closed}, and the preimages of closed sets via continuous maps are closed, we conclude that $\mathcal{V}(\bs{\lambda})$ is closed as well.

We now move on to the second claim. Since we already showed that $\mathcal{V}(\bs{\lambda})$ is topologically closed, it is enough to show that is \emph{strictly} upward closed, i.e., that for all $V\in\mathcal{V}(\bs{\lambda})$ and $W>V$ one has also $W\in\mathcal{V}(\bs{\lambda})$. This is an easy consequence of Eq.~\eqref{classical mixing}. If we substitute there $K=W-V$, on the left-hand side we get a state in $\cl \left(\co \mathcal{F}_G(\bs{\lambda})\right)$, the closed convex hull of the set of free Gaussian states. From the right-hand side we learn that this state is actually a Gaussian state, hence by Gaussian convexity of the set $\mathcal{F}_G(\bs{\lambda})$ it must be also free. Finally, its covariance matrix is $V+K=W$, which leads us to conclude that $W\in\mathcal{V}(\bs{\lambda})$.
\end{proof}
\addtocounter{prop}{-1}
\endgroup

\begin{prop} \label{prop convexity}
Assume that Postulates~\ref{post tensor},~\ref{post partial trace} and~\ref{post weak closed} hold. Moreover, let the set of Gaussian free states be invariant under local mixing with 50:50 beam splitters, i.e., assume that for any pair of states $\rho,\sigma\in \mathcal{F}_G(\bs{\lambda})$ of a system with total number of modes $N$ one has
\begin{equation}
\left( \bigotimes_{j=1}^N U(\pi/4)_{j,j} \right) (\rho\otimes \sigma) \left( \bigotimes_{j=1}^N U(\pi/4)_{j,j} \right)^\dag \in \mathcal{F}_G(\bs{\lambda})\, ,
\end{equation}
where $U(\pi/4)_{j,j}$ is the unitary that implements the action of a 50:50 beam splitter on the $j$-th mode of $\rho$ and the same mode of $\sigma$. Then the corresponding set of free covariance matrices $\mathcal{V}(\bs{\lambda})$ is convex.
\end{prop}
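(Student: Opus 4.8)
The plan is to first prove that $\mathcal{V}(\bs{\lambda})$ is \emph{midpoint convex}, i.e.\ that $V,W\in\mathcal{V}(\bs{\lambda})$ implies $\tfrac12(V+W)\in\mathcal{V}(\bs{\lambda})$, by an explicit physical construction, and then to upgrade this to full convexity by exploiting the topological closedness of $\mathcal{V}(\bs{\lambda})$ supplied by Proposition~\ref{prop V}.

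To establish midpoint convexity, take $V,W\in\mathcal{V}(\bs{\lambda})$, so that both $\rho_G[V,0]$ and $\rho_G[W,0]$ belong to $\mathcal{F}_G(\bs{\lambda})$. By Postulate~\ref{post tensor} their tensor product $\rho_G[V,0]\otimes\rho_G[W,0]$, which has covariance matrix $V\oplus W$, is free. Applying the mode-by-mode beam splitters $\bigotimes_{j=1}^{N}U(\pi/4)_{j,j}$ and invoking the beam-splitter invariance hypothesis, the resulting $2N$-mode state remains free; I would then discard the second output port and use Postulate~\ref{post partial trace} to conclude that its marginal on the first output is still free.

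The one genuine computation is to identify this marginal covariance matrix. In the real ordering the first-port quadratures are $\tilde r=\tfrac{1}{\sqrt2}\,(r_\rho+r_\sigma)$, where $r_\rho,r_\sigma$ denote the quadrature vectors of the two input copies; because the inputs form a product state and are therefore uncorrelated, the cross terms vanish and $\mathrm{Cov}(\tilde r)=\tfrac12\big(\mathrm{Cov}(r_\rho)+\mathrm{Cov}(r_\sigma)\big)=\tfrac12(V+W)$, while zero displacement is preserved throughout. Hence $\rho_G[\tfrac12(V+W),0]\in\mathcal{F}_G(\bs{\lambda})$, that is $\tfrac12(V+W)\in\mathcal{V}(\bs{\lambda})$, as desired.

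Finally, iterating midpoint convexity places every convex combination $\lambda V+(1-\lambda)W$ with $\lambda$ a dyadic rational in $[0,1]$ inside $\mathcal{V}(\bs{\lambda})$. Since the dyadic rationals are dense in $[0,1]$ and $\mathcal{V}(\bs{\lambda})$ is topologically closed by Proposition~\ref{prop V} (which only requires Postulate~\ref{post weak closed}), taking limits gives $\lambda V+(1-\lambda)W\in\mathcal{V}(\bs{\lambda})$ for every $\lambda\in[0,1]$, so $\mathcal{V}(\bs{\lambda})$ is convex. I expect the main subtlety to be the subsystem bookkeeping: one must verify that mixing corresponding modes of the two copies and then discarding one output port is a bona fide partial trace over spatially separated subsystems, so that Postulate~\ref{post partial trace} legitimately applies, and that the reduced covariance matrix is exactly the symmetric mean $\tfrac12(V+W)$ without any spurious cross-correlation contributions.
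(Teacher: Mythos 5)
Your proof is correct and follows essentially the same route as the paper's: form the product state with covariance $V\oplus W$ via Postulate~\ref{post tensor}, apply the mode-by-mode 50:50 beam splitters to obtain a free state whose marginal has covariance $\tfrac12(V+W)$, discard one port via Postulate~\ref{post partial trace}, and upgrade midpoint convexity to convexity using the closedness of $\mathcal{V}(\bs{\lambda})$ from Proposition~\ref{prop V}. The only cosmetic difference is that you spell out the dyadic-rational iteration explicitly where the paper simply invokes the standard fact that a closed midpoint-convex set is convex.
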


\begin{proof}
Since $\mathcal{V}(\bs{\lambda})$ is topologically closed by Prop.~\ref{prop V}, it is convex if and only if it is midpoint convex, meaning that $\frac12 (V+W)\in\mathcal{V}(\bs{\lambda})$ whenever $V,W\in\mathcal{V}(\bs{\lambda})$.
Hence, let us show that $\mathcal{V}(\bs{\lambda})$ is midpoint convex. Picking $V,W$ as above, construct the state $\rho_G[V,0]\otimes \rho_G[W,0]$, which is free by Postulate~\ref{post tensor}, and whose covariance matrix is $V\oplus W = \left( \begin{smallmatrix} V & \\ & W \end{smallmatrix}\right)$. By hypothesis, mode-by-mode mixing with a 50:50 beam splitter yields another free state, whose covariance matrix will be
\begin{equation*}
   \frac12 \begin{pmatrix} V+W & V-W \\ V-W & V +W \end{pmatrix} .
\end{equation*}
Tracing away one of the two spatially separated subsystems leaves the other in a state with covariance matrix $\frac12 (V+W)$. Such a state must be free by Postulate~\ref{post partial trace}, hence we conclude that $\frac12 (V+W)\in\mathcal{V}(\bs{\lambda})$, as claimed.
\end{proof}

\subsection{Quantification and distillation}

We remind the reader that in general a Gaussian completely positive map $\Lambda$ from $A$ to $B$ acts on covariance matrices as follows~\cite{giedke_2002,fiurasek_2002}:
\begin{equation}
\Lambda: V_{A}\longmapsto \big( \Gamma_{AB} + \Sigma V_A \Sigma \big) \big/ (\Gamma_A + \Sigma V_A \Sigma)\, .
\label{G map}
\end{equation}
Here, $\Gamma_{AB}$ is the quantum covariance matrix associated with the Choi state of the map, and $\Sigma$ is the matrix that reverses the signs of all the momenta of the system on which it is acting, i.e.,
\begin{equation}
\Sigma \coloneqq \begin{pmatrix} \id & \\ & -\id \end{pmatrix}
\label{Sigma}
\end{equation}
according to the block decomposition of Eq.~\eqref{Omega}. The Schur complement of a $2\times 2$ block matrix $M = \left( \begin{smallmatrix} P & X \\ Y & Q \end{smallmatrix} \right)$ with respect to one of its square invertible blocks is given by
\begin{equation}
M/P \coloneqq Q - YP^{-1} X\, .
\label{Schur}
\end{equation}
It is elementary to verify that the above quantity behaves well under scalar multiplication, in the sense that $(\lambda M) / (\lambda P) = \lambda (M/P)$ for all scalars $\lambda\neq 0$. Furthermore, it is known that the Schur complement admits the following variational representation:
\begin{equation}
M/P = \max \big\{ R:\ M \geq 0 \oplus R \big\} \, ,
\label{Schur var}
\end{equation}
the ordering of the set on the right-hand side being the positive semidefinite (aka L\"owner) ordering. From Eq.~\eqref{Schur var} it follows in particular that $M/P$ is monotonically non-decreasing in $M>0$. For more details on the properties of Schur complements we refer the reader to the excellent monograph~\cite{zhang_2005}. A straightforward consequence of the above discussion is the following result.

\begin{lemma} \label{Gamma free}
If $\Gamma_{AB}$ represents a Gaussian free operation $\Lambda\in \mathcal{O}_G(\bs{\lambda}_A\rightarrow\bs{\lambda}_B)$, then
\begin{equation}
\big( \Gamma_{AB} + \Sigma V_A \Sigma \big) \big/ (\Gamma_A + \Sigma V_A \Sigma)\, \in\, \mathcal{V}(\bs{\lambda}_B)\qquad \forall\ V_A\in\mathcal{V}(\bs{\lambda}_A)\, .
\label{Gamma free eq1}
\end{equation}
Equivalently,
\begin{equation}
\forall\ V_A\in\mathcal{V}(\bs{\lambda}_A)\ \ \exists\ W_B\in\mathcal{V}(\bs{\lambda}_B) : \quad \Gamma_{AB} \geq (- \Sigma V_A \Sigma) \oplus W_B\, .
\label{Gamma free eq2}
\end{equation}
\end{lemma}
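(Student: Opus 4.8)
The plan is to establish the two displayed statements in turn: first prove Eq.~\eqref{Gamma free eq1} directly from the defining property of a free Gaussian operation, and then show that Eq.~\eqref{Gamma free eq1} and Eq.~\eqref{Gamma free eq2} are merely two encodings of the same fact, the bridge between them being the variational characterization of the Schur complement in Eq.~\eqref{Schur var} together with the upward closedness of $\mathcal{V}(\bs{\lambda}_B)$ proved in Prop.~\ref{prop V}. For Eq.~\eqref{Gamma free eq1} I would fix $V_A\in\mathcal{V}(\bs{\lambda}_A)$; by the definition of $\mathcal{V}(\bs{\lambda}_A)$ (simplified via Postulate~\ref{post D invariance}) the state $\rho_G[V_A,0]$ lies in $\mathcal{F}_G(\bs{\lambda}_A)$, and since $\Lambda\in\mathcal{O}_G(\bs{\lambda}_A\rightarrow\bs{\lambda}_B)$ is resource non-generating, $\Lambda(\rho_G[V_A,0])\in\mathcal{F}_G(\bs{\lambda}_B)$. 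By the action of a Gaussian channel on covariance matrices, Eq.~\eqref{G map}, this output state is a Gaussian state whose covariance matrix is exactly $(\Gamma_{AB}+\Sigma V_A\Sigma)/(\Gamma_A+\Sigma V_A\Sigma)$, and membership of the output in $\mathcal{F}_G(\bs{\lambda}_B)$ is by definition the assertion that this covariance matrix belongs to $\mathcal{V}(\bs{\lambda}_B)$.

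For the equivalence I would introduce $M\coloneqq\Gamma_{AB}+\Sigma V_A\Sigma$, understanding $\Sigma V_A\Sigma$ as embedded in the $A$-block, and $P\coloneqq\Gamma_A+\Sigma V_A\Sigma$ for the corresponding $A$-block of $M$. Since every quantum covariance matrix is strictly positive, both $\Gamma_A>0$ and $\Sigma V_A\Sigma>0$ (a congruence of $V_A>0$ by the invertible $\Sigma$), so the pivot $P>0$ and the Schur complement $M/P$ is well defined with the variational representation Eq.~\eqref{Schur var} available: $M/P=\max\{W_B:\, M\geq 0_A\oplus W_B\}$. Because $\Sigma V_A\Sigma$ contributes only to the $A$-block, the block inequality $M\geq 0_A\oplus W_B$ rearranges verbatim into $\Gamma_{AB}\geq(-\Sigma V_A\Sigma)\oplus W_B$, which is exactly the inequality appearing in Eq.~\eqref{Gamma free eq2}.

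The two formulations then match cleanly. Assuming Eq.~\eqref{Gamma free eq1}, I take $W_B\coloneqq M/P\in\mathcal{V}(\bs{\lambda}_B)$; the variational formula guarantees $M\geq 0_A\oplus(M/P)$, which is precisely Eq.~\eqref{Gamma free eq2} for this choice of $W_B$. Conversely, assuming Eq.~\eqref{Gamma free eq2} for some $W_B\in\mathcal{V}(\bs{\lambda}_B)$, the inequality reads $M\geq 0_A\oplus W_B$, so the variational formula forces $M/P\geq W_B$, and upward closedness of $\mathcal{V}(\bs{\lambda}_B)$ (Prop.~\ref{prop V}) then promotes $M/P$ to an element of $\mathcal{V}(\bs{\lambda}_B)$, recovering Eq.~\eqref{Gamma free eq1}. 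I expect no substantial obstacle here: the arguments are elementary once the right objects are named. The two points to handle with care are the block bookkeeping — ensuring the Schur complement is taken with respect to $P=\Gamma_A+\Sigma V_A\Sigma$ and that $\Sigma V_A\Sigma$ sits only in the input block — and the single genuinely nontrivial input, namely the upward closedness of $\mathcal{V}(\bs{\lambda}_B)$, which is where the Postulates enter through Prop.~\ref{prop V}.
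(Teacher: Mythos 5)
Your proposal is correct and follows essentially the same route as the paper's own proof: Eq.~\eqref{Gamma free eq1} as a direct restatement of the resource non-generating property via Eq.~\eqref{G map}, and the equivalence with Eq.~\eqref{Gamma free eq2} via the variational characterization of the Schur complement in Eq.~\eqref{Schur var} together with the upward closedness of $\mathcal{V}(\bs{\lambda}_B)$ from Prop.~\ref{prop V}. Your explicit check that the pivot $\Gamma_A+\Sigma V_A\Sigma$ is strictly positive is a small but welcome addition that the paper leaves implicit.
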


\begin{proof}
The first claim is a direct reformulation of the definition of resource non-generating operations, obtained via the explicit action of a Gaussian completely positive map as given by Eq.~\eqref{G map}. As for the second, let us observe that the inequality $\Gamma_{AB} \geq (- \Sigma V_A \Sigma) \oplus W_B$ implies, by Eq.~\eqref{Schur var}, that $\big( \Gamma_{AB} + \Sigma V_A \Sigma \big) \big/ (\Gamma_A + \Sigma V_A \Sigma)\geq W_B$. Since the right-hand side is the covariance matrix of a free state by hypothesis, and Prop.~\ref{prop V} holds, we deduce that the left-hand side is a free covariance matrix as well. The converse inequality is proved similarly, by realizing that Eq.~\eqref{Schur var} implies that
\begin{equation*}
\Gamma_{AB} + \Sigma V_A \Sigma \geq 0_A \oplus \big( \Gamma_{AB} + \Sigma V_A \Sigma \big) \big/ (\Gamma_A + \Sigma V_A \Sigma)\eqqcolon 0_A \oplus W_B\, ,
\end{equation*}
which leads immediately to $\Gamma_{AB}\geq (-\Sigma V_A \Sigma) \oplus W_B$, as claimed.
\end{proof}

We now come to the discussion of the properties of the $\kappa_{\F}$ function defined in Eq.~\eqref{kappa}. We first state some elementary facts.

\begin{lemma} \label{elementary properties kappa}
The set $T_{\F}(V) \coloneqq \lset t \geq 1 \bar t V\in \mathcal{V}(\bs{\lambda}) \rset$ is non-empty and topologically closed for all $V\in\text{QCM}_{N}$ as long as $\mathcal{V}(\bs{\lambda})$ is non-empty.
\end{lemma}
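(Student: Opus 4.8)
The plan is to establish the two asserted properties of $T_{\F}(V)$ separately, each reducing to a structural fact about $\mathcal{V}(\bs{\lambda})$ already secured in Proposition~\ref{prop V}: its upward closedness for the non-emptiness claim, and its topological closedness for the closedness claim. Throughout I would use the fact, recalled just below Eq.~\eqref{Heisenberg}, that every $V\in\QCM_N$ is strictly positive definite, $V>0$.

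For non-emptiness, I would fix any $V_0\in\mathcal{V}(\bs{\lambda})$, which exists by hypothesis. Since $V>0$ is invertible, the real symmetric matrix $V^{-1/2}V_0 V^{-1/2}$ has a finite largest eigenvalue, so setting $t\coloneqq\max\{1,\lambda_{\max}(V^{-1/2}V_0 V^{-1/2})\}$ gives $tV\geq V_0$ in the L\"owner order while keeping $t\geq 1$. By the upward closedness of $\mathcal{V}(\bs{\lambda})$ (Proposition~\ref{prop V}, which rests on Postulates~\ref{post convexity} and~\ref{post weak closed}), the relations $V_0\in\mathcal{V}(\bs{\lambda})$ and $tV\geq V_0$ force $tV\in\mathcal{V}(\bs{\lambda})$, hence $t\in T_{\F}(V)$ and the set is non-empty.

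For closedness, I would consider the linear, hence continuous, map $\phi:\RR\to\mathcal{M}_{2N}(\RR)$, $\phi(t)\coloneqq tV$. Proposition~\ref{prop V} gives that $\mathcal{V}(\bs{\lambda})$ is topologically closed (this part uses only Postulate~\ref{post weak closed}); moreover $\QCM_N$ is itself closed in $\mathcal{M}_{2N}(\RR)$, being cut out by the closed conditions $V=V^T$ and $V\geq i\Omega$, so the closedness of $\mathcal{V}(\bs{\lambda})$ holds in the ambient space as well. Then $T_{\F}(V)=[1,\infty)\cap\phi^{-1}(\mathcal{V}(\bs{\lambda}))$ is the intersection of the closed half-line $[1,\infty)$ with the preimage of a closed set under a continuous map, and is therefore closed.

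Both steps are elementary, and I do not expect a genuine obstacle. The only point deserving a little care — and the real content of the statement — is non-emptiness, which relies on two ingredients acting together: the strict positive-definiteness of quantum covariance matrices, so that an unbounded rescaling $tV$ eventually dominates the fixed matrix $V_0$ in the L\"owner order, and the upward closedness of $\mathcal{V}(\bs{\lambda})$ inherited via Proposition~\ref{prop V} from convexity and the random-displacement identity~\eqref{classical mixing}. The closedness claim, by contrast, is a routine continuity-and-preimage argument.
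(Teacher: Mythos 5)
Your proof is correct and follows essentially the same route as the paper's: for non-emptiness you scale $V$ until it dominates a fixed free covariance matrix in the L\"owner order (the paper uses the constant $\|W\|_\infty\|V^{-1}\|_\infty$ where you use the sharper $\lambda_{\max}(V^{-1/2}V_0V^{-1/2})$, but the mechanism — strict positivity of $V$ plus upward closedness from Prop.~\ref{prop V} — is identical), and for closedness both arguments exhibit $T_\F(V)$ as the intersection of $[1,\infty)$ with the preimage of the closed set $\mathcal{V}(\bs{\lambda})$ under the continuous map $t\mapsto tV$.
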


\begin{proof}
We first show that $T_{\F}(V)\neq \emptyset$ for all $V\in\text{QCM}_{N}$. Picking $W\in \mathcal{V}(\bs{\lambda})\neq \emptyset$, it is easy to see that
\begin{equation}
\|W\|_\infty \|V^{-1}\|_\infty\, V \geq W\, ,
\label{upper bound eigenvalues}
\end{equation}
where $\|\cdot\|_\infty$ denotes the operator norm. Observe that quantum covariance matrices are always strictly positive, hence $V^{-1}$ exists. We then write
\begin{align*}
\|W\|_\infty \|V^{-1}\|_\infty\, V &\geq \|W\|_\infty \|V^{-1}\|_\infty\, \lambda_{\min}(V) \id \\
&= \|W\|_\infty \|V^{-1}\|_\infty\, \|V^{-1}\|_\infty^{-1} \id \\
&= \|W\|_\infty \id \\
&\geq W\, .
\end{align*}
By the upward closedness of $\V_\F(\bs{\lambda})$ (Prop.~\ref{prop V}), one deduces immediately that $\max\left\{1, \|W\|_\infty \|V^{-1}\|_\infty \right\} \in T_{\F}(V)$, showing that the set is non-empty. To show that it is also topologically closed, just observe that
\begin{equation}
T_{\F}(V) = \Big([1,\infty)\cdot V\Big)\cap \mathcal{V}(\bs{\lambda})\, .
\end{equation}
The left-hand side of the above identity is the intersection of two closed sets, thanks to Prop.~\ref{prop V}, hence it is itself closed.
\end{proof}

The following is a refinement of Prop.~\ref{prop:kappa} from the main text.

\begin{prop} \label{prop:kappa-reformul}
The function $\kappa_{\F}(\cdot)$ defined by Eq.~\eqref{kappa} is:
\begin{enumerate}[(a)]
\item finite and well-defined for all $V\in\text{QCM}_N$;
\item faithful, in the sense that $\kappa_{\F}(V)=1$ if and only if $V\in\mathcal{V}(\bs{\lambda})$;
\item such that $\kappa_{\F}(s V)\geq s^{-1}\kappa_{\F}(V)$ for all $s\geq 1$;
\item monotonically non-increasing under $\OOG$; and
\item continuous.
\end{enumerate}
\end{prop}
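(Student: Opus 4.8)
The plan is to dispose of the elementary items (a)--(c) using the machinery already in place, and then to concentrate on (d) and (e), which carry the real content. For (a) I would simply invoke Lemma~\ref{elementary properties kappa}: the set $T_\F(V)=\{t\ge 1:\,tV\in\mathcal{V}(\bs{\lambda})\}$ is nonempty and topologically closed and is bounded below by $1$, so its infimum is attained and $\kappa_\F(V)=\min T_\F(V)$ is finite and well-defined. Part (b) is then immediate: if $V\in\mathcal{V}(\bs{\lambda})$ then $1\in T_\F(V)$ and, no element being smaller than $1$, $\kappa_\F(V)=1$; conversely, if $\kappa_\F(V)=1$ the attained minimum gives $V=1\cdot V\in\mathcal{V}(\bs{\lambda})$. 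For (c) I would combine upward closedness (Prop.~\ref{prop V}) with the strict positivity of $V$: since $\kappa_\F(V)\,V\in\mathcal{V}(\bs{\lambda})$, for $s\ge 1$ one has $tsV\in\mathcal{V}(\bs{\lambda})$ exactly when $ts\ge\kappa_\F(V)$, whence $\kappa_\F(sV)=\max\{1,\,s^{-1}\kappa_\F(V)\}\ge s^{-1}\kappa_\F(V)$.

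The key step is (d). Setting $t\coloneqq\kappa_\F(V)$, so that $tV\in\mathcal{V}(\bs{\lambda}_A)$, and letting $\Gamma_{AB}$ represent $\Lambda\in\O_G(\bs{\lambda}_A\to\bs{\lambda}_B)$, the output covariance matrix is $\Lambda(V)$ by Eq.~\eqref{G map}; I want to show $t\,\Lambda(V)\in\mathcal{V}(\bs{\lambda}_B)$, which (as $t\ge 1$) forces $\kappa_\F(\Lambda(V))\le t$. The trick is the scaling identity $(\lambda M)/(\lambda P)=\lambda(M/P)$ for Schur complements, which lets me rewrite
\[
t\,\Lambda(V)=\big(t\Gamma_{AB}+\Sigma(tV)\Sigma\big)\big/\big(t\Gamma_A+\Sigma(tV)\Sigma\big),
\]
to be compared with $\Lambda(tV)=\big(\Gamma_{AB}+\Sigma(tV)\Sigma\big)/\big(\Gamma_A+\Sigma(tV)\Sigma\big)$. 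The two full matrices differ by $(t-1)\Gamma_{AB}\ge 0$, so the first dominates the second in the L\"owner order; since the Schur complement with respect to the upper-left block is monotone in the full matrix---as noted after Eq.~\eqref{Schur var}---I conclude $t\,\Lambda(V)\ge\Lambda(tV)$. By Lemma~\ref{Gamma free} applied to $tV\in\mathcal{V}(\bs{\lambda}_A)$ one has $\Lambda(tV)\in\mathcal{V}(\bs{\lambda}_B)$, and upward closedness (Prop.~\ref{prop V}) upgrades this to $t\,\Lambda(V)\in\mathcal{V}(\bs{\lambda}_B)$, completing (d).

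For (e) I would establish lower and upper semicontinuity separately. Local boundedness comes for free from the estimate in the proof of Lemma~\ref{elementary properties kappa}: for a fixed $W\in\mathcal{V}(\bs{\lambda})$ one has $\kappa_\F(V)\le\max\{1,\|W\|_\infty\|V^{-1}\|_\infty\}$, and since QCMs are strictly positive with $V_k^{-1}\to V^{-1}$ whenever $V_k\to V$, the values $\kappa_\F(V_k)$ stay bounded. For lower semicontinuity I would pass to a subsequence realizing $\liminf_k\kappa_\F(V_k)=:L<\infty$; then $\kappa_\F(V_k)\,V_k\in\mathcal{V}(\bs{\lambda})$ converges to $LV$, which lies in $\mathcal{V}(\bs{\lambda})$ by closedness (Prop.~\ref{prop V}), whence $\kappa_\F(V)\le L$. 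For upper semicontinuity I would exploit upward closedness: writing $V_k=V+E_k$ with $E_k\to 0$ and using $V\ge\lambda_{\min}(V)\,\id$ with $\lambda_{\min}(V)>0$, I can choose $\delta_k\downarrow 0$ so that $(\kappa_\F(V)+\delta_k)V_k\ge\kappa_\F(V)\,V\in\mathcal{V}(\bs{\lambda})$, so that $\kappa_\F(V_k)\le\kappa_\F(V)+\delta_k$ and hence $\limsup_k\kappa_\F(V_k)\le\kappa_\F(V)$. Together these give continuity.

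I expect the main obstacle to lie in the L\"owner-order bookkeeping of (d): the crucial observation is that it is the \emph{scaling} of the Schur complement that aligns $t\,\Lambda(V)$ and $\Lambda(tV)$ into two Schur complements of a common partition, so that a single monotonicity step closes the argument---without this rewriting the difference $t\,\Lambda(V)-\Lambda(tV)$ looks sign-indefinite. A secondary, purely technical difficulty is the explicit perturbative choice of $\delta_k$ needed for the upper-semicontinuity half of (e).
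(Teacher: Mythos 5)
Your proposal is correct and follows essentially the same route as the paper's proof: (a)--(c) from Lemma~\ref{elementary properties kappa} and upward closedness, (d) via the homogeneity and L\"owner-monotonicity of the Schur complement combined with $\Gamma_{AB}\geq 0$ and Lemma~\ref{Gamma free}, and (e) by splitting into lower semicontinuity (closedness of $\mathcal{V}(\bs{\lambda})$ along a subsequence realizing the $\liminf$) and upper semicontinuity (a multiplicative perturbation absorbing the additive one via $V\geq\lambda_{\min}(V)\,\id$). The only cosmetic difference is in the upper-semicontinuity step, where the paper rescales $V+\Delta$ by $\left(1-\|\Delta\|_\infty\|V^{-1}\|_\infty\right)^{-1}$ and invokes properties (c) and (d), while you perturb the multiplier directly; these are equivalent.
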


\begin{proof}
Claim (a) follows directly from Lemma~\ref{elementary properties kappa}, while (b) is obvious from the definition. As for (c), one can distinguish two cases: if $s V\in \mathcal{V}(\bs{\lambda})$, then on the one hand $s \geq \kappa_{\F}(V)$, while on the other hand $\kappa_{\F}(s V)=1\geq s^{-1}\kappa_{\F}(V)$; if $s V\notin \mathcal{V}(\bs{\lambda})$, then
\begin{align*}
\kappa_{\F}(s V) &= \min \lset t\geq 1 \bar t s V\in  \mathcal{V}(\bs{\lambda}) \rset \\
&= s^{-1} \min\lset t'\geq s \bar t' V\in  \mathcal{V}(\bs{\lambda}) \rset \\
&= s^{-1} \min\lset t'\geq 1\bar t' V\in  \mathcal{V}(\bs{\lambda}) \rset \\
&= s^{-1} \kappa_{\F}(V)\, .
\end{align*}

We now turn to the proof of (d), i.e., the monotonicity of $\kappa_\F$ under Gaussian free operations. Call $\xi\coloneqq \kappa_\F (V)$. Then, by virtue of Eq.~\eqref{G map}, all we have to show is that $\kappa_\F \left( ( \Gamma_{AB} + \Sigma V_A \Sigma ) \big/ (\Gamma_A + \Sigma V_A \Sigma) \right)\leq \xi$, for all free Gaussian operations represented by covariance matrices $\Gamma_{AB}$ as in Lemma~\ref{Gamma free}. This amounts to proving that $\xi \big( \Gamma_{AB} + \Sigma V_A \Sigma \big) \big/ (\Gamma_A + \Sigma V_A \Sigma) \in \mathcal{V}(\bs{\lambda}_B)$. We write
\begin{align*}
&\xi \big( \Gamma_{AB} + \Sigma V_A \Sigma \big) \big/ (\Gamma_A + \Sigma V_A \Sigma) \\
&\qquad \texteq{(1)} \big( \xi \Gamma_{AB} + \xi \Sigma V_A \Sigma \big) \big/ (\xi \Gamma_A + \xi \Sigma V_A \Sigma) \\
&\qquad\textgeq{(2)} \big( \Gamma_{AB} + \xi \Sigma V_A \Sigma \big) \big/ ( \Gamma_A + \xi \Sigma V_A \Sigma) \\
&\qquad\overset{(3)}{\in} \mathcal{V}(\bs{\lambda}_B)\, .
\end{align*}
The justification of the above steps is as follows: (1) comes from homogeneity; (2) uses the monotonicity of the Schur complement, together with the observation that since $\xi=\kappa_\F (V)\geq 1$ one has $\xi \Gamma_{AB}\geq \Gamma_{AB}$; (3) is an elementary consequence of Eq.~\eqref{Gamma free eq1} applied to the free covariance matrix $\xi V$.

A useful observation that follows from the just established property (d) is that $\kappa_\F(\cdot)$ is also monotonically non-increasing with respect to the positive semidefinite ordering. In fact, adding some positive semidefinite matrix to the input never creates a resource state out of a free state, i.e., it is always a free operation.

What is left to show is claim (e). We will break the proof into two steps: first, we will show that $\limsup_{\Delta \rightarrow 0} \kappa_{\F} (V+\Delta) \leq \kappa_{\F}(V)$ for all $V>0$ (upper semicontinuity); second, we will complement this bound by means of the inequality $\liminf_{\Delta\rightarrow 0} \kappa_{\F} (V+\Delta) \geq \kappa_{\F}(V)$ (lower semicontinuity). Clearly, the two statements together imply that $\lim_{\Delta\rightarrow 0} \kappa_{\F} (V+\Delta) = \kappa_{\F}(V)$, which is claim (e).
Now, the upper semicontinuity rests on the upward closedness of $\mathcal{V}(\bs{\lambda})$. For a sufficiently small perturbation $\Delta$, write
\begin{align*}
V+\Delta &\geq V - \|\Delta\|_{\infty} \id \\
&\geq V - \|\Delta\|_{\infty} \|V^{-1}\|_{\infty} V \\
&= \left(1-  \|\Delta\|_{\infty} \|V^{-1}\|_{\infty}\right) V\, ,
\end{align*}
we deduce that
\begin{equation*}
\frac{V+\Delta}{1-\|\Delta\|_{\infty} \|V^{-1}\|_{\infty}} \geq V\, .
\end{equation*}
Using properties (c) and (d), this in turn implies that
\begin{align*}
\left( 1-\|\Delta\|_{\infty} \|V^{-1}\|_{\infty}\right) \kappa_{\F} (V+\Delta) &\leq \kappa_{\F} \left( \frac{V+\Delta}{1-\|\Delta\|_{\infty} \|V^{-1}\|_{\infty}} \right) \\
&\leq \kappa_{\F}(V)\, ,
\end{align*}
from which it follows that
\begin{equation*}
\kappa_{\F}(V+\Delta) \leq \frac{\kappa_{\F}(V)}{1-\|\Delta\|_\infty \|V^{-1}\|_\infty}\, .
\end{equation*}
In particular,
\begin{equation*}
\limsup_{\Delta\rightarrow 0} \kappa_{\F} (V+\Delta) \leq \lim_{\Delta\rightarrow 0} \frac{\kappa_{\F}(V)}{1-\|\Delta\|_{\infty} \|V^{-1}\|_{\infty}} = \kappa_\F (V) \, ,
\end{equation*}
which proves upper semicontinuity. The lower semicontinuity comes instead from the topological closedness of the set $\mathcal{V}(\bs{\lambda})$, as established by Prop.~\ref{prop V}. To see why, consider a $V>0$ and a sequence of sufficiently small perturbation matrices $(\Delta_{n})_{n\in\mathds{N}}$ such that $\lim_{n\rightarrow\infty} \Delta_{n} = 0$. Since $\kappa_{\F} (V+ \Delta_{n})\, (V+\Delta_n)\in \mathcal{V}(\bs{\lambda})$ for all $n$, taking a subsequence $(n_{k})_{k\in\mathds{N}}$ such that $\lim_{k\rightarrow \infty} \kappa_{\F} (V+\Delta_{n_{k}}) = \liminf_{n\rightarrow \infty} \kappa_{\F} (V+\Delta_{n})$, we obtain by closedness that
\begin{align*}
\mathcal{V}(\bs{\lambda}) &\ni \lim_{k\rightarrow\infty} \left( \kappa_{\F} (V+\Delta_{n_{k}})\, (V+\Delta_{n_{k}})\right) \\
&= \left( \lim_{k\rightarrow\infty} \kappa_\F (V+\Delta_{n_k}) \right) V \\
&= \left( \liminf_{n\rightarrow \infty} \kappa_{\F} (V+\Delta_{n}) \right) V\, ,
\end{align*}
which implies in turn that $\kappa_{\F} (V) \leq \liminf_{n\rightarrow \infty} \kappa_{\F} (V+\Delta_{n})$, proving lower semicontinuity and hence claim (e).
\end{proof}

\begin{rem}
In fact, we have shown that
\begin{equation}
\kappa_{\F} (V) \leq \max\left\{ 1,\,\|V^{-1}\|_\infty \min_{W\in \mathcal{V}(\bs{\lambda})} \|W\|_\infty\right\} .
\end{equation}
for all $V\in\text{QCM}_{N}$.
\end{rem}

\begin{rem}
An inspection of the above proof of the monotonicity result (Prop.~\ref{prop:kappa}(d)) reveals that the only property of the Choi covariance matrix $\Gamma_{AB}$ we have made use of is its positive semidefiniteness. This observation shows that $\kappa_{\F}$ is monotonic under any operation of the form specified by Eq.~\eqref{G map} with $\Gamma_{AB}\geq 0$.
\end{rem}

The fundamental property of the $\kappa_\F$ measure we employ here concerns its behaviour when multiple copies of the same state are considered. 

\begin{lemma} \label{kappa multi-copy}
For all $\bs{\lambda},\bs{\mu}$, consider the $\kappa_\F$ functions identified via Eq.~\eqref{kappa} by the sets of free covariance matrices $\mathcal{V}(\bs{\lambda})$, $\mathcal{V}(\bs{\mu})$, and $\mathcal{V}(\bs{\lambda}\oplus\bs{\mu})$. Then for all $V\in\text{\emph{QCM}}_N$ and $W\in\text{\emph{QCM}}_M$, where $N=\sum_j n_j$ and $M=\sum_j m_j$, it holds that
\begin{equation*}
\kappa_\F \left(V \oplus W\right) = \max\{ \kappa_\F (V), \kappa_\F (W)\}\, .
\end{equation*}
\end{lemma}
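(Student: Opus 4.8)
The plan is to establish the two inequalities $\kappa_\F(V\oplus W)\le \max\{\kappa_\F(V),\kappa_\F(W)\}$ and $\kappa_\F(V\oplus W)\ge \max\{\kappa_\F(V),\kappa_\F(W)\}$ separately, using the tensor-product and partial-trace Postulates at the level of covariance matrices. Throughout I use that the minimum in Eq.~\eqref{kappa} is genuinely attained --- which follows from the closedness of the set $T_\F(V)$ (Lemma~\ref{elementary properties kappa}) --- so that $\kappa_\F(V)\,V\in\mathcal{V}(\bs{\lambda})$, and likewise $\kappa_\F(W)\,W\in\mathcal{V}(\bs{\mu})$ and $\kappa_\F(V\oplus W)(V\oplus W)\in\mathcal{V}(\bs{\lambda}\oplus\bs{\mu})$.

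For the upper bound, set $m\coloneqq \max\{\kappa_\F(V),\kappa_\F(W)\}$. Since $m\ge \kappa_\F(V)\ge 1$ and $V>0$, one has $mV=\tfrac{m}{\kappa_\F(V)}\,\kappa_\F(V)\,V\ge \kappa_\F(V)\,V\in\mathcal{V}(\bs{\lambda})$, so that $mV\in\mathcal{V}(\bs{\lambda})$ by the upward closedness of Prop.~\ref{prop V}; the same argument yields $mW\in\mathcal{V}(\bs{\mu})$. The covariance-matrix form of Postulate~\ref{post tensor}, namely $\mathcal{V}(\bs{\lambda})\oplus\mathcal{V}(\bs{\mu})\subseteq\mathcal{V}(\bs{\lambda}\oplus\bs{\mu})$, then gives $m(V\oplus W)=mV\oplus mW\in\mathcal{V}(\bs{\lambda}\oplus\bs{\mu})$, whence $\kappa_\F(V\oplus W)\le m$ directly from the definition.

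For the lower bound, write $t^\ast\coloneqq\kappa_\F(V\oplus W)$, so that $t^\ast(V\oplus W)=(t^\ast V)\oplus(t^\ast W)\in\mathcal{V}(\bs{\lambda}\oplus\bs{\mu})$. Applying the covariance-matrix form of Postulate~\ref{post partial trace}, $\Pi\,\mathcal{V}(\bs{\lambda}\oplus\bs{\mu})\,\Pi^T\subseteq\mathcal{V}(\bs{\lambda})$ with $\Pi$ the projection onto the modes of the first subsystem, extracts the diagonal block and yields $t^\ast V\in\mathcal{V}(\bs{\lambda})$; hence $\kappa_\F(V)\le t^\ast$. By the symmetric argument $\kappa_\F(W)\le t^\ast$, so $\max\{\kappa_\F(V),\kappa_\F(W)\}\le t^\ast=\kappa_\F(V\oplus W)$. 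Combining the two bounds proves the claim.

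As for difficulty, the argument is essentially a bookkeeping exercise once these tools are in place, so I do not anticipate a genuine obstacle. The only points requiring care are that the minimizer is attained (so the scaling inequalities run on honest members of $\mathcal{V}$ rather than on infima) and that the partial-trace postulate acts on covariance matrices precisely by taking the principal block of the retained modes, which is exactly what makes $t^\ast V$ emerge from $t^\ast(V\oplus W)$. I would also remark that the statement specializes, by induction, to $\kappa_\F(V^{\oplus n})=\kappa_\F(V)$, which is the tensorization property invoked in the proof of Theorem~\ref{thm:no-go}.
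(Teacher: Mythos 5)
Your proof is correct and follows essentially the same route as the paper's: the upper bound via upward closedness (Prop.~\ref{prop V}) and the tensor-product postulate applied to $\max\{\kappa_\F(V),\kappa_\F(W)\}\cdot(V\oplus W)$, and the lower bound via the partial-trace postulate extracting $\kappa_\F(V\oplus W)\,V$ from $\kappa_\F(V\oplus W)\,(V\oplus W)$. Your added care about the attainment of the minimum (via Lemma~\ref{elementary properties kappa}) is a correct and welcome explicitation of a step the paper leaves implicit.
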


\begin{proof}
Call $\eta\coloneqq \max\{ \kappa_\F (V), \kappa_\F (W)\}$. From Prop.~\ref{prop V} and from the inequalities $\eta\geq \kappa_\F (V),\kappa_\F(W)$ we deduce immediately that $\eta V\in\mathcal{V}(\bs{\lambda})$, $\eta W\in \mathcal{V}(\bs{\mu})$. By Postulate~\ref{post tensor}, we deduce that
\begin{equation*}
\eta \left( V\oplus W \right) = (\eta V)\oplus (\eta W) \in \mathcal{V}(\bs{\lambda})\oplus \mathcal{V}(\bs{\mu})\subseteq \mathcal{V}(\bs{\lambda}\oplus \bs{\mu})\, ,
\end{equation*}
which implies by definition that $\kappa_\F (V\oplus W)\leq \eta = \max\{ \kappa_\F (V)_\F, \kappa_\F (W)\}$. As for the opposite inequality, call $\zeta \coloneqq \kappa_\F \left(V \oplus W\right)$. Then $\zeta (V\oplus W)\in\mathcal{V}(\bs{\lambda}\oplus \bs{\mu})$, and by Postulate~\ref{post partial trace} we can generate a free state of the first system by tracing away the second. At the level of covariance matrices this amounts to performing a local projection, for which we adopt the same notation as in the characterization of Postulate~\ref{post partial trace} in the manuscript. We then obtain
\begin{equation*}
\zeta V = \Pi \left( \zeta (V\oplus W) \right) \Pi^T \in \Pi\, \mathcal{V}(\bs{\lambda}\oplus\bs{\mu})\, \Pi^T \subseteq \mathcal{V}(\bs{\lambda})\, .
\end{equation*}
This shows that $\kappa_\F(V) \leq \zeta = \kappa_\F (V\oplus W)$. Repeating the reasoning with $W$ instead of $V$ we get also $\kappa_\F (W)\leq \kappa_\F (V\oplus W)$, and putting the two inequalities together we have $\max\{ \kappa_\F(V), \kappa_\F(W)\} \leq \kappa_\F (V\oplus W)$, which completes the proof.
\end{proof}

\begingroup
\renewcommand\thethm{\ref{thm:no-go}}
\begin{thm}
Consider an arbitrary Gaussian resource theory satisfying Postulates~\ref{post D invariance}-\ref{post weak closed} and two covariance matrices $V,W\in \text{\emph{QCM}}_N$. If $\kappa_\F (W) > \kappa_\F (V)$, then it is impossible to find a sequence $(W_{n})_{n\in\mathds{N}}\subset\text{\emph{QCM}}_{N}$ such that $\lim_{n\rightarrow\infty} W_{n}=W$ and the transformations $V^{\oplus n}\rightarrow W_{n}$ are possible with Gaussian resource non-generating operations for all $n$.
\end{thm}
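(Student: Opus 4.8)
The plan is to argue by contradiction, combining the three properties of $\kappa_\F$ established above: its behaviour under direct sums (Lemma~\ref{kappa multi-copy}), its monotonicity under Gaussian resource non-generating operations (Prop.~\ref{prop:kappa-reformul}(d)), and its continuity (Prop.~\ref{prop:kappa-reformul}(e)). Suppose, towards a contradiction, that a sequence $(W_n)_{n\in\NN}\subset\QCM_N$ exists with $\lim_{n\to\infty} W_n = W$ and such that for every $n$ there is a Gaussian resource non-generating operation implementing $V^{\oplus n}\to W_n$.

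First I would use monotonicity to control each term of the sequence: since $W_n$ is the output of an operation in $\O_G$ applied to $V^{\oplus n}$, Prop.~\ref{prop:kappa-reformul}(d) yields $\kappa_\F(W_n)\leq \kappa_\F\!\left(V^{\oplus n}\right)$ for all $n$. The crucial simplification then comes from the tensorization property: applying Lemma~\ref{kappa multi-copy} inductively gives $\kappa_\F\!\left(V^{\oplus n}\right) = \max\{\kappa_\F(V),\ldots,\kappa_\F(V)\} = \kappa_\F(V)$, so that $\kappa_\F$ is completely insensitive to the number of copies supplied. Combining these two facts produces the uniform bound $\kappa_\F(W_n)\leq \kappa_\F(V)$ valid for every $n$.

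Finally I would pass to the limit. By continuity of $\kappa_\F$ (Prop.~\ref{prop:kappa-reformul}(e)) and $W_n\to W$, we get $\kappa_\F(W) = \lim_{n\to\infty}\kappa_\F(W_n)\leq \kappa_\F(V)$, which directly contradicts the standing hypothesis $\kappa_\F(W)>\kappa_\F(V)$. This closes the argument.

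The genuinely substantive content of the proof has already been absorbed into Lemma~\ref{kappa multi-copy}: the fact that stacking arbitrarily many copies of a state cannot raise $\kappa_\F$ above its single-copy value is exactly what makes distillation impossible, and it is this step --- resting on the tensor-product stability (Postulate~\ref{post tensor}) and partial-trace stability (Postulate~\ref{post partial trace}) of the free set at the level of covariance matrices --- that should be regarded as the heart of the matter. Once it is in hand, the remaining assembly is a routine monotonicity-plus-continuity contradiction; the only point requiring mild care is to note that continuity is invoked at the fixed target $W$ (where $\kappa_\F$ is evaluated on a sequence of matrices of a \emph{fixed} size $2N$), rather than along the growing inputs $V^{\oplus n}$, so that Prop.~\ref{prop:kappa-reformul}(e) applies without any issue of varying dimension.
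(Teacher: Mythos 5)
Your proposal is correct and follows exactly the same route as the paper's own proof: monotonicity of $\kappa_\F$ under $\O_G$, the tensorization identity $\kappa_\F(V^{\oplus n})=\kappa_\F(V)$ from Lemma~\ref{kappa multi-copy}, and continuity of $\kappa_\F$ to pass to the limit $W_n\to W$, yielding $\kappa_\F(W)\leq\kappa_\F(V)$ and hence a contradiction. Your closing remark about continuity being invoked only at fixed size $2N$ is a sensible clarification, but the argument is otherwise the paper's verbatim.
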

\begin{proof}
If said transformation were possible, by combining Prop.~\ref{prop:kappa} and Lemma~\ref{kappa multi-copy} one would obtain
\begin{equation*}
\kappa_\F (V) = \kappa_\F\left( V^{\oplus n} \right) \geq \kappa_\F (W_n)\, .
\end{equation*}
Since $\kappa_\F$ is continuous, one has $\lim_{n\rightarrow\infty} \kappa_\F (W_n) = W$ and hence also $\kappa_\F(V)\geq \kappa_\F (W)$, which is a contradiction.
\end{proof}
\addtocounter{thm}{-1}
\endgroup

\begin{rem}
The remark after Prop.~\ref{prop:kappa} has an important consequence here. Namely, we now see that the above no-go result still holds if one allows as free operations all resource non-generating maps of the form given by Eq.~\eqref{G map} with $\Gamma_{AB}\geq 0$. Remember that a map acting on the second moments as in Eq.~\eqref{G map} is a valid physical transformation (completely positive map) if and only if $\Gamma_{AB}$ is a quantum covariance matrix, i.e., if and only if $\Gamma_{AB}\geq i\Omega_{AB}$. Since this is a strictly stronger constraint than simply requiring that $\Gamma_{AB}\geq 0$, this observation extends the validity of Thm.~\ref{thm:no-go} even further. For instance, its domain of applicability now includes the maps considered in~\cite[Eq.~(24)-(26)]{depalma_2015}, since the corresponding Choi covariance matrices can be shown to be positive semidefinite provided~\cite[Eq.~(27)]{depalma_2015} is obeyed. However, as some of these maps will be unphysical, the extension discussed here may be regarded mainly as a mathematical curiosity.
\end{rem}


\section{Semidefinite programming representation of Gaussian resources} \label{app SDP}

\subsection{Quantum entanglement}

Recall that the characterization of the set of separable states $\rho_G[V_{AB},s] \in \S_{A|B}$ can be simplified to~\cite{lami_2016}
\begin{equation}\label{eq:sep_simplified}
	\rho_G[V_{AB},s] \in \S_{A|B} \iff V_{AB} \geq \gamma_A \oplus i\Omega_B,
\end{equation}
which gives the following semidefinite representation of the quantifier $\kappa_\S$:
\begin{equation}
\begin{aligned}
\hspace{-5em}\kappa_\S(V_{AB}) = \,& \underset{\lambda, \gamma_{A}}{\text{minimize}}
& & \lambda\\
& \text{subject to} & & \lambda \, V_{AB} \geq \gamma_A \oplus i \Omega_B\\
&&& \gamma_A = \gamma_A^T\\
&&& \gamma_A \geq i\Omega_A\\
&&& \lambda \geq 1
\end{aligned}
\end{equation}
where one can equivalently consider the subsystem $B$ instead. The Lagrange dual of $\upsilon_\S$ can be obtained as
\begin{equation}
\begin{aligned}
\upsilon_\S(V_{AB}) = \,& \underset{W, X}{\text{minimize}}
& & \<W, V_{AB}\>\\
& \text{subject to} & &  \< W_{22}, i\Omega_B \> + \< X,  i\Omega_A \> = 1\\
&&& \Re(W_{11}) = \Re(X)\\
&&& W, X \geq 0\\
\end{aligned}
\end{equation}
where $W = \left(\begin{smallmatrix}W_{11} & W_{12}\\ W_{12}^\dagger & W_{22}\end{smallmatrix}\right)$ and we use the Hilbert-Schmidt inner product $\<X,Y\> = \Tr(XY)$. With respect to the dual problem in Ref.~\cite{hyllus_2006-1} which requires an optimization over the spaces of Hermitian matrices $\Herm_{2n} \oplus \Herm_{2n}$, using the simplified the condition for separability in Eq.~\eqref{eq:sep_simplified} reduces the optimization space to $\Herm_{2n}\oplus\Herm_{2 n_A}$.

To see that we were justified in claiming that the optimal value of $\upsilon_\S$ is equal to the optimal value of the dual, we will show that strong duality holds. Take $W^\star\oplus X^\star = \mathbbm{1}_{2n+2n_A} + \frac{i}{2n} \Omega\oplus\Omega_A$, and notice that $W^\star \oplus X^\star > 0$ since it is Hermitian and all of its eigenvalues are given by $\frac{2n\pm 1}{2n}>0$, and $\Tr\left(W_{22}^\star i \Omega_B \right) + \Tr\left(X i \Omega_A\right) = 1$. This means that $W^\star$ and $X^\star$ form a strictly feasible solution to the dual problem, and so Slater's condition is satisfied and strong duality holds~\cite{boyd_2004}.

\subsection{Steering}

The primal problem corresponding to the quantifier of $A\to B$ steerability $\kappa_\T$ is then given by
\begin{equation}
\begin{aligned}
\hspace{-5em}\kappa_\T(V_{AB}) = \,& \underset{\lambda\geq 1}{\text{minimize}}
& & \lambda\\
& \text{subject to} & & \lambda \, V_{AB} \geq  0_A \oplus i \Omega_B.
\end{aligned}
\end{equation}
An important property of the Schur complement is that, given a Hermitian matrix $M = \left( \begin{smallmatrix} P & X \\ X^\dagger & Q \end{smallmatrix} \right)$ such that $P > 0$, we have $M \geq 0 \iff M/P \geq 0$~\cite{zhang_2005}. Now, since $V_A > 0$, we can equivalently write
\begin{equation}
\begin{aligned}\label{eq:steering_dual}
\hspace{-5em}\kappa_\T(V_{AB}) = \,& \underset{\lambda \geq 1}{\text{minimize}}
& & \lambda\\
& \text{subject to} & & \lambda \, V_{AB} / V_A \geq i \Omega_B.
\end{aligned}
\end{equation}
The corresponding inverse dual is given as
\begin{equation}
\begin{aligned}
\hspace{-31em}{\upsilon_\T}(V_{AB}) = \,& \underset{W}{\text{minimize}}
& & \< W,  V_{AB} \> \\
& \text{subject to} & &  \< W_{22}, i\Omega_B \> = 1\\
&&& W \geq 0\\
\hphantom{\hspace{-31em}{\upsilon_\T}(V_{AB})} = \,& \underset{W}{\text{minimize}}
& & \< W,  V_{AB} / V_A \> \\
& \text{subject to} & &  \<  W,  i\Omega_B \> = 1\\
&&& W \geq 0.
\end{aligned}
\end{equation}
Taking $W^\star = \mathbbm{1}_{2n} + \frac{i}{2n} \Omega$, we have that $W^\star > 0$ since it is Hermitian and its eigenvalues are given by $\frac{2n\pm 1}{2n}>0$, and $\Tr\left(W^\star i \Omega\right) = 1$. This means that $W^\star$ is a strictly feasible solution to the latter dual problem, so Slater's condition is satisfied and strong duality holds.

In fact, Eq.~\ref{eq:steering_dual} suggests an interesting alternative characterization of this quantifier in terms of a symplectic eigenvalue problem. To see this, consider first the following result (see also~\cite{bhatia_2015}).

\begin{prop}
The smallest symplectic eigenvalue $\nu_{\min}(V)$ of any $V > 0$ can be expressed as
\begin{equation}\begin{aligned}\label{eq:smin}
	\nu_{\min}(V) &= \max \lset \lambda \geq 0 \bar V \geq i\lambda\Omega \rset\\
	&= \min \lset \< W, V \> \bar \< W, i\Omega \> = 1 \rset.
\end{aligned}\end{equation}
\end{prop}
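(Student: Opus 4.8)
The plan is to establish the two equalities separately: the first through Williamson's normal form, and the second through semidefinite duality. For the first equality I would begin by invoking Williamson's theorem: since $V>0$, there is a symplectic matrix $S$ (i.e.\ $S\Omega S^T=\Omega$) with $V=S\,D\,S^T$, where $D=\diag(\nu_1,\dots,\nu_n,\nu_1,\dots,\nu_n)$ and $\nu_1,\dots,\nu_n>0$ are the symplectic eigenvalues, so that $\nu_{\min}(V)=\min_j\nu_j$. The key point is symplectic invariance of the constraint: since $\Omega=S\Omega S^T$ also gives $S^{-1}\Omega S^{-T}=\Omega$, congruence by $S^{-1}$ shows that $V\geq i\lambda\Omega$ holds if and only if $D\geq i\lambda\Omega$. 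After a simultaneous permutation of rows and columns the latter decouples into a direct sum of $2\times 2$ blocks $\left(\begin{smallmatrix}\nu_j & -i\lambda\\ i\lambda & \nu_j\end{smallmatrix}\right)$, whose eigenvalues are $\nu_j\pm\lambda$; each block is positive semidefinite precisely when $\lambda\le\nu_j$ (for $\lambda\geq 0$). Hence $D\geq i\lambda\Omega$ holds iff $\lambda\le\nu_{\min}$, which yields $\max\{\lambda\geq 0: V\geq i\lambda\Omega\}=\nu_{\min}(V)$.

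For the second equality I would recognize the maximization over $\lambda$ as a semidefinite program in a single scalar variable, with coefficient matrix $i\Omega$ and constant term $V$, and write down its Lagrange dual, which is $\min\{\langle W,V\rangle: \langle W,i\Omega\rangle=1,\ W\geq 0\}$. (I would flag here that the constraint $W\geq 0$, consistent with the steering dual above, is essential: without it the minimum is unbounded below.) Weak duality is immediate and worth recording: for any feasible $\lambda$ and any $W\geq 0$ with $\langle W,i\Omega\rangle=1$ one has $\langle W,V\rangle-\lambda=\langle W,\,V-i\lambda\Omega\rangle\geq 0$, so the dual optimum never drops below the primal optimum.

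To close the gap I would prefer an explicit optimal dual point over merely invoking Slater's condition (although Slater applies too, since $V>0$ makes $\lambda=0$ strictly feasible). Working in Williamson coordinates I would set $W=S^{-T}\widetilde W S^{-1}$ and use $V=SDS^T$ together with $\Omega=S\Omega S^T$ to obtain the identities $\langle W,V\rangle=\langle\widetilde W,D\rangle$ and $\langle W,i\Omega\rangle=\langle\widetilde W,i\Omega\rangle$. Choosing $\widetilde W$ to be the rank-one projector supported on a minimal symplectic mode, with block $\tfrac12\left(\begin{smallmatrix}1 & i\\ -i & 1\end{smallmatrix}\right)$, one checks directly that $\widetilde W\geq 0$, that $\langle\widetilde W,D\rangle=\nu_{\min}$, and that $\langle\widetilde W,i\Omega\rangle=1$. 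Thus $W$ is dual-feasible and attains the value $\nu_{\min}(V)$, which together with weak duality forces both optima to equal $\nu_{\min}(V)$.

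The main obstacle I anticipate is bookkeeping rather than conceptual difficulty: keeping the symplectic congruences consistent (in particular transporting the constraint with $\Omega=S\Omega S^T$ rather than its inverse) and verifying that the proposed $\widetilde W$ sits on the boundary of the positive cone, namely that the chosen block is genuinely positive semidefinite while saturating the normalization $\langle\widetilde W,i\Omega\rangle=1$. Everything else reduces to routine $2\times 2$ computations.
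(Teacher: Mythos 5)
Your proof is correct and follows essentially the same route as the paper's: Williamson's theorem plus symplectic congruence for the first equality (you diagonalize the $2\times 2$ blocks directly where the paper uses a Schur-complement condition $D-\lambda^2 D^{-1}\geq 0$, but these are equivalent), and Lagrange duality for the second. Your explicit dual optimizer $\widetilde W=\tfrac12\left(\begin{smallmatrix}1 & i\\ -i & 1\end{smallmatrix}\right)$ on the minimal mode goes beyond the paper, which merely invokes strong duality, and you are right to flag that the constraint $W\geq 0$ is implicitly required in the second line of the statement.
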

\begin{proof}
Recall that a matrix $S$ is called symplectic if $S \Omega S^T = \Omega$. By Williamson's theorem~\cite{williamson_1936,simon_1999}, there exists a symplectic matrix $S$ such that $S V S^T = D \oplus D$ with $D=\diag\left(\nu_1(V),\ldots,\nu_n(V)\right) > 0$ being a diagonal matrix of the symplectic eigenvalues of $V$. We then have
\begin{equation}\begin{aligned}
\max \lset \lambda \bar V \geq i \lambda \Omega \rset &\texteq{(1)} \max \lset \lambda \bar SVS^T \geq i \lambda S \Omega S^T \rset\\
&= \max \lset \lambda \bar D \oplus D \geq i \lambda \Omega \rset\\
&\texteq{(2)} \max \lset \lambda \bar D - \lambda^2 D^{-1} \geq 0\rset\\
&\texteq{(3)} \max \lset \lambda \bar \nu_j(V)^2 - \lambda^2 \geq 0 \; \forall j \rset\\
&= \nu_{\min}(V)
\end{aligned}\end{equation}
where (1) follows since any symplectic matrix is non-singular, (2) follows from the Schur complement condition for positive semidefiniteness, and (3) follows since both $D$ and $D^{-1}$ are diagonal with $\nu_j(V) > 0 \;\forall j$. The second line of Eq.~\eqref{eq:smin} then follows by strong Lagrange duality.
\end{proof}

This leads to the following simple representation:
\begin{equation}\begin{aligned}
	\upsilon_\T (V_{AB}) = \nu_{\min}\left(V_{AB} / V_A\right).
\end{aligned}\end{equation}
The quantifier can thus be related to a commonly used measure, the Gaussian $A\to B$ steerability~\cite{kogias_2015} $N_\T(V_{AB}) \coloneqq -\sum_k \log \min\{1, \nu_k(V_{AB} / V_A) \}$. In particular, in the case of a bipartite system where $n_B = 1$, $V_{AB}/V_A$ has only one symplectic eigenvalue and therefore we have $N_\T(V_{AB}) = \log \kappa_\T(V_{AB})$.

\end{document}